\newcommand{\storage}[1]{\E{\left(#1\right)}}
\newcommand{\supply}[1]{s{\left(#1\right)}}
\newcommand{\vect}[1]{\mathbf{#1}}
\newcommand{\matr}[1]{\mathbf{#1}}
\newcommand{\junk}[1] {}
\def\XXint#1#2#3{{\setbox0=\hbox{$#1{#2#3}{\int}$}
\vcenter{\hbox{$#2#3$}}\kern-.5\wd0}}
\newcommand*\widebar[1]{%
  \hbox{%
    \vbox{%
      \hrule height 0.5pt 
      \kern0.3ex
      \hbox{%
        \kern-0.05em
        \ensuremath{#1}%
        \kern-0.05em
      }%
    }%
  }%
}
\newcommand{\half}{\frac{1}{2}}
\newcommand{\halftxt}{(1/2)}
\newcommand{\ihat}{\hat{\imath}}
\newcommand{\jhat}{\hat{\jmath}}
\newcommand{\khat}{\hat{k}}
\newcommand{\I}{\matr{I}}
\newcommand{\x}{\vect{x}}
\newcommand{\R}{\matr{R}}
\newcommand{\F}{\matr{F}}
\newcommand{\Lm}{\matr{L}}
\newcommand{\Ehat}{\hat{E}}
\newcommand{\Evhat}{\hat{\Ev}}
\newcommand{\Uhat}{\hat{U}}
\newcommand{\Uvhat}{\hat{\Uv}}
\newcommand{\Hvhat}{\hat{\Hv}}
\newcommand{\E}{\mathcal{E}}
\newcommand{\dx}{\Delta x}
\newcommand{\dy}{\Delta y}
\newcommand{\dz}{\Delta z}
\newcommand{\dxhalf}{\frac{\dx}{2}}
\newcommand{\dyhalf}{\frac{\dy}{2}}
\newcommand{\dzhalf}{\frac{\dz}{2}}
\newcommand{\Ev}{\vect{E}}
\newcommand{\Hv}{\vect{H}}
\newcommand{\Uv}{\vect{U}}
\newcommand{\eps}{\varepsilon}
\newcommand{\Q}{\matr{Q}}
\pgfmathsetmacro{\fieldarrthick}{0.6}
\newcommand{\mhat}{\hat{m}}
\definecolor{HV}{rgb}{0,0.4,0}
\definecolor{Hcol}{rgb}{0.7,0,0}
\definecolor{EB}{rgb}{0.7,0,0}
\definecolor{Ecol}{rgb}{0,0,1}
\definecolor{Secondary}{rgb}{0.55,0.55,0.55}
\definecolor{Shaded}{rgb}{0.8,0.8,0.8}
\definecolor{legendGreen}{rgb}{0,0.5,0}
\definecolor{legendRed}{rgb}{1,0,0}
\definecolor{legendBlue}{rgb}{0,0,1}
\definecolor{legendGrey}{rgb}{0.5,0.5,0.5}
\newcommand{\Sign}{\matr{P}}
\newcommand{\Singvals}{\matr{\Sigma}}
\newcommand{\IO}{\matr{S}}
\newcommand{\T}{\matr{T}}
\newcommand{\one}{\matr{1}}
\newcommand{\rx}{{r_x}}
\newcommand{\ry}{{r_y}}
\newcommand{\dt}{\Delta t}
\newcommand{\D}{\matr{D}}
\newcommand{\DDH}{\D}
\newcommand{\DDU}{\D}
\newcommand{\DDE}{\D}
\newcommand{\lE}{l}
\newcommand{\lU}{l'}
\newcommand{\lH}{l'}
\newcommand{\Dlx}{\DDE_{\lE_{x}}}
\newcommand{\Dly}{\DDE_{\lE_{y}}}
\newcommand{\Dlz}{\DDE_{\lE_{z}}}
\newcommand{\Dl}{\DDE_{\l}}
\newcommand{\DlpU}{\DDU_{l'_U}}
\newcommand{\Dlp}{\DDH_{l'}}
\newcommand{\Dlpx}{\DDH_{\lH_{x}}}
\newcommand{\Dlpy}{\DDH_{\lH_{y}}}
\newcommand{\Dlpz}{\DDH_{\lH_{z}}}
\newcommand{\DlpyB}{\DDU_{\lU_{y} \Bottom}}
\newcommand{\DlpyT}{\DDU_{\lU_{y} \Top}}
\newcommand{\DlpzS}{\DDU_{\lU_{z} \South}}
\newcommand{\DlpzN}{\DDU_{\lU_{z} \North}}
\newcommand{\aE}{A'}
\newcommand{\aH}{A}
\newcommand{\DAp}{\D_{A'}}
\newcommand{\DAx}{\D_{\aH_{x}}}
\newcommand{\DAy}{\D_{\aH_{y}}}
\newcommand{\DAz}{\D_{\aH_{z}}}
\newcommand{\DA}{\D_{A}}
\newcommand{\DApx}{\D_{\aE_{x}}}
\newcommand{\Dex}{\D_{\varepsilon_x}}
\newcommand{\Dhat}{\hat{\D}}
\newcommand{\Dmx}{\D_{\mu_x}}
\newcommand{\Dmy}{\D_{\mu_y}}
\newcommand{\Dmz}{\D_{\mu_z}}
\newcommand{\Dm}{\D_{\mu}}
\newcommand{\Dsx}{\D_{\sigma_x}}
\newcommand{\BExB}{\matr{B}_{{x\Bottom}}}
\newcommand{\BExT}{\matr{B}_{{x \Top}}}
\newcommand{\BExS}{\matr{B}_{{x \South}}}
\newcommand{\BExN}{\matr{B}_{{x \North}}}
\newcommand{\BEyB}{\matr{B}_{{y \Bottom}}}
\newcommand{\BEyT}{\matr{B}_{{y \Top}}}
\newcommand{\BEyW}{\matr{B}_{{y \West}}}
\newcommand{\BEyE}{\matr{B}_{{y \East}}}
\newcommand{\BEzS}{\matr{B}_{{z \South}}}
\newcommand{\BEzN}{\matr{B}_{{z \North}}}
\newcommand{\BEzW}{\matr{B}_{{z \West}}}
\newcommand{\BEzE}{\matr{B}_{{z \East}}}
\newcommand{\GxEy}{\matr{G}_{xy}}
\newcommand{\GxEz}{\matr{G}_{xz}}
\newcommand{\GyEx}{\matr{G}_{yx}}
\newcommand{\GyEz}{\matr{G}_{yz}}
\newcommand{\GzEx}{\matr{G}_{zx}}
\newcommand{\GzEy}{\matr{G}_{zy}}
\newcommand{\Bottom}{,B}
\newcommand{\Top}{,T}
\newcommand{\North}{,N}
\newcommand{\East}{,E}
\newcommand{\South}{,S}
\newcommand{\West}{,W}
\newcommand{\HyB}{\Uv_{y \Bottom}}
\newcommand{\HyT}{\Uv_{y \Top}}
\newcommand{\HzS}{\Uv_{z \South}}
\newcommand{\HzN}{\Uv_{z \North}}
\newtheorem{theorem}{Theorem}
\newtheorem{definition}{Definition}
\newcommand{\phantomplus}{\phantom{+}}
\begin{document}

\title{A Dissipation Theory for Three-Dimensional\\FDTD with Application to Stability\\Analysis and Subgridding}

\author{Fadime~Bekmambetova,~\IEEEmembership{Student Member,~IEEE}, 
        Xinyue~Zhang~\IEEEmembership{Student Member,~IEEE},
        and~Piero~Triverio,~\IEEEmembership{Senior Member,~IEEE}
\thanks{Manuscript received ...; revised ...}%
\thanks{This work was supported in part by the Natural Sciences and Engineering Research Council of
Canada (Discovery grant program) and in part by the Canada Research Chairs program.}
\thanks{F.~Bekmambetova, X.~Zhang and P.~Triverio are with the Edward S. Rogers Sr. Department of Electrical and Computer Engineering, University of Toronto, Toronto, M5S~3G4 Canada (email: fadime.bekmambetova@mail.utoronto.ca, xinyuezhang.zhang@mail.utoronto.ca, piero.triverio@utoronto.ca).}
}

\markboth{IEEE Transactions on Antennas and Propagation}%
{Bekmambetova, Zhang, Triverio}

\maketitle

\IEEEpeerreviewmaketitle

\begin{abstract}
The finite-difference time-domain (FDTD) algorithm is a popular numerical method for solving electromagnetic problems. FDTD simulations can suffer from instability due to the explicit nature of the method. Stability enforcement can be particularly challenging in scenarios where a setup is composed of multiple components, such as  grids of different resolution, advanced boundary conditions,  reduced-order models, and lumped elements. We propose a dissipation theory for 3-D FDTD inspired by the principle of energy conservation. We view the FDTD update equations for a 3-D region as a dynamical system, and show under which conditions the system is dissipative. By requiring each component of an FDTD-like scheme to be dissipative, the stability of the overall coupled scheme follows by construction. The proposed framework enables the creation of provably stable schemes in an easy and modular fashion, since conditions are imposed on the individual components, rather than on the overall coupled scheme as in existing approaches. With the proposed framework, we derive a new subgridding scheme with guaranteed stability, low reflections, support for material traverse and arbitrary (odd) grid refinement ratio.

\end{abstract}
\begin{IEEEkeywords}
dissipation, finite-difference time-domain, stability, subgridding
\end{IEEEkeywords}

\def\figurename{Fig.}

\section{Introduction}
\label{sec:intro}

Stability is a major challenge in explicit numerical schemes for solving differential equations. The conventional finite-difference time-domain (FDTD) algorithm for solving Maxwell's equations~\cite{Yee} is stable when the iteration time step is below the Courant-Friedrichs-Lewy (CFL) limit~\cite{Gedney}. However, stability enforcement becomes much more difficult in the case of more advanced FDTD schemes such as, for example, those including locally-refined grids~\cite{okoniewski1997three,thoma1996consistent,xiao2007three}, reduced-order models~\cite{kulas2001rom,denecker}, and hybridizations of FDTD with integral equation methods~\cite{bretones1998hybrid} or circuit simulators~\cite{Erdin-2000-FDTD-MOR-SPICE}. Many of these advanced schemes can be seen as the coupling of multiple blocks, such as FDTD grids of different resolution, reduced-order models, lumped components, boundary conditions, and so on. While the stability of each individual block may be well understood, analyzing and enforcing the stability of the overall coupled scheme can be a daunting task.

Several techniques have been proposed for FDTD stability analysis, such as von Neumann analysis~\cite{taflove2005computational}, the iteration method~\cite{Gedney} and the energy method~\cite{edelvik2004general}. Unfortunately, von Neumann analysis cannot be applied in presence of inhomogeneous materials. The iteration method~\cite{Gedney, Remis} analyzes the eigenvalues of a matrix describing FDTD iterations. Even though an FDTD setup can often be divided into several subsystems, such as grids of different resolution, reduced-order models, interpolation conditions, and other blocks, the iteration matrix needs to be derived for the entire scheme, which can lead to lengthy derivations and make stability analysis very tedious.  A similar issue arises in the case of the energy method~\cite{edelvik2004general}, where one must write an expression for the stored energy in the entire domain, and show that the coupled scheme satisfies the principle of energy conservation. 

In this paper, we present a dissipation theory for modular stability enforcement of complex 3-D FDTD systems, generalizing previous work in two dimensions~\cite{jnl-2017-fdtd-dissipative}. The method is based on the theory of dissipative dynamical systems~\cite{willems1972dissipative}. First, we write the FDTD update equations for an arbitrary inhomogeneous region in the form of a discrete-time dynamical system. The tangential magnetic and electric field are seen as input and output, respectively. Suitable discrete expressions for the energy stored inside the region, and for the energy absorbed through the boundaries are proposed and used to investigate under which conditions the system is dissipative. This result becomes the basis for a powerful framework to create new FDTD algorithms with guaranteed stability. By imposing each subsystem  to to be dissipative, the overall coupled scheme formed by these subsystems is dissipative by construction, and thus stable. The main virtue of this approach is that the stability of the overall scheme follows automatically from conditions imposed on each subsystem \emph{individually}. This makes stability analysis simpler and more modular, which is a remarkable advantage over existing approaches that require the analysis of the overall coupled scheme. 
This modularity also facilitates the generation of new schemes, since proving a given subsystem (e.g. a sophisticated boundary condition) to be dissipative allows coupling it to any other dissipative block, with no need for additional analysis. Finally, we use the theory to derive a stable subgridding scheme for 3-D FDTD that works for any odd refinement ratio and naturally supports traversal of material boundaries by the subgridding interface.

This paper is structured as follows. In Sec.~\ref{sec:dyn_sys}, we show how the update equations for an FDTD region can be cast into the form of a dynamical system with suitable inputs and outputs. In Sec.~\ref{sec:fdtd_3d_dissipativity}, we provide dissipativity conditions for the system and show their relation to the CFL limit. In Sec.~\ref{sec:method}, we describe the modular method for enforcing FDTD stability using the proposed theory, and in Sec.~\ref{sec:subgridding} we apply this method to derive a stable subgridding scheme. Numerical examples are given in Sec.~\ref{sec:examples}.

\section{Dynamical System Formulation of a 3-D FDTD Region}
\label{sec:dyn_sys}

\begin{figure}[t]
	\centering
	
	\includegraphics[width=\columnwidth]{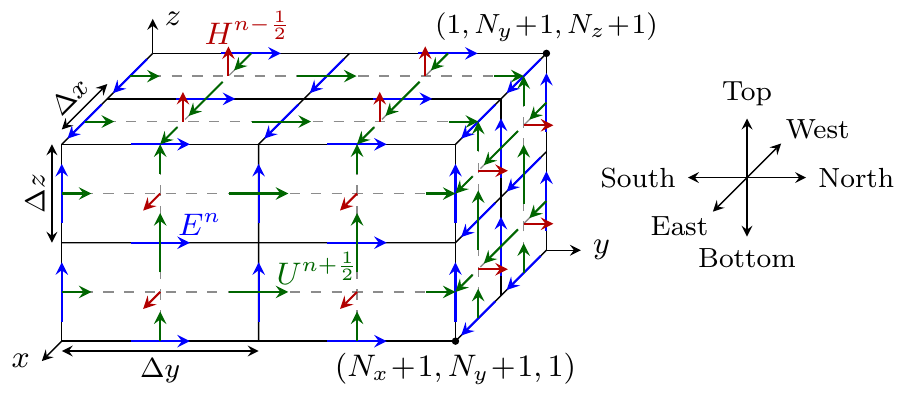}
	
	\caption{Illustration of the 3-D FDTD region with the hanging variables shown in green and the regular electric and magnetic samples in blue and red, respectively. The secondary grid is shown with the dashed line.}\label{fig:region}
\end{figure}

Consider the FDTD region shown in Fig.~\ref{fig:region}, which contains $N_x$, $N_y$, and $N_z$ primary cells in the $x$, $y$, and $z$ directions, respectively. Electric field is sampled on the primary grid edges at the integer time points, $n$, and magnetic field is sampled on the primary grid faces at the time points shifted by half of a time step, $n\!-\!\halftxt$~\cite{Gedney}. For simplicity, we assume that the region is discretized uniformly with primary cells of dimensions $\dx\!\times\!\dy\!\times\!\dz$. As shown in Fig.~\ref{fig:region}, we use cardinal directions in the $xy$-plane, with $+y$ being the North. ``Top'' and ``Bottom'' denote the $+z$ and $-z$ directions, respectively.

In addition to the regular electric and magnetic field samples, denoted by $E^n$ and $H^{n-\halftxt}$, respectively, we define the so-called hanging variables~\cite{venkatarayalu2007stable}, $U^{n+\halftxt}$, which are magnetic field samples collocated with the electric fields on the boundary of the region. The hanging variables, which are tangential to the boundary and perpendicular to the corresponding electric field samples, are used to define the power supplied to the region and to facilitate the derivation of a self-contained dynamical model for the region based on FDTD update equations.

In this section, we take the FDTD equations for the regular field samples, $E^n$ and $H^{n-\halftxt}$, and cast them in the form of a dynamical system with the hanging variables, $U^{n+\halftxt}$, as its inputs. The output consists of electric field samples on the boundary of the region.

\subsection{Equations at Each Node}
\label{subsec:eq_each_node}

\begin{figure*}[t]
	
	\centering
	\begin{tabular}{lll}
		(a)& \phantom{a}(b)& \phantom{a}(c)\\[5pt]
		\phantom{a}\includegraphics[scale=0.95]{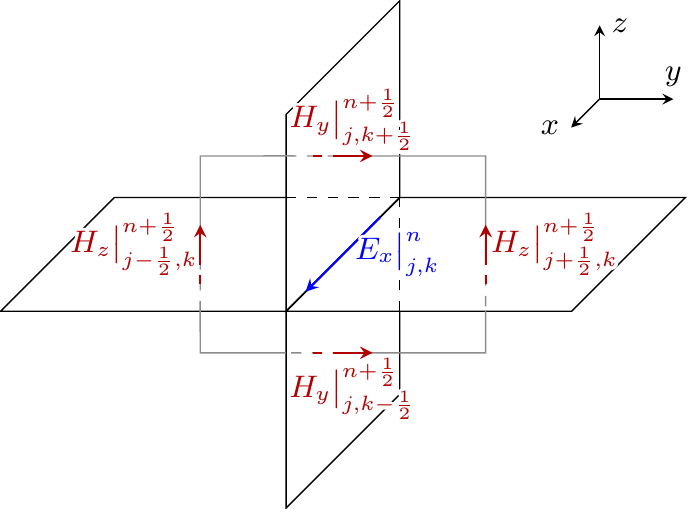}
		&
		\phantom{ab}\includegraphics[scale=0.95]{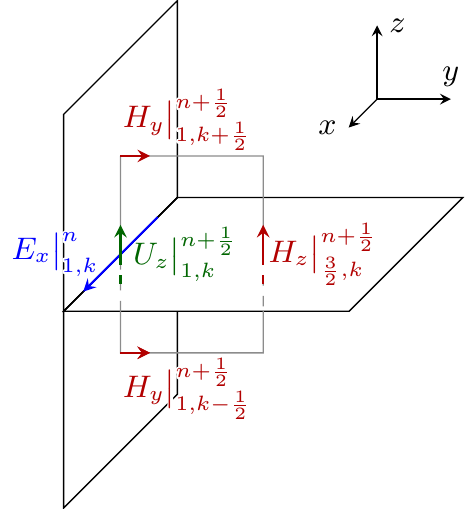}
		&
		\phantom{ac}\includegraphics[scale=0.95]{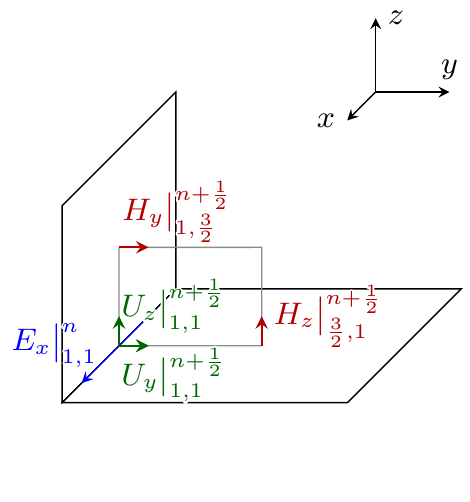}
	\end{tabular}
	\caption{Variables involved in the recurrence relations for (a) internal $E_x^n$ samples (Type~1), (b) $E_x^n$ samples of Type~2 on the South face, and (c) the $E_x^n$ samples of Type~3 on the Bottom-South edge of the boundary. The common subscript ``$i\!+\!\halftxt$'' is omitted for clarity.}
	\label{fig:update}
\end{figure*}

\begin{figure}[t!]
	\centering
	
	\hspace*{-1cm}\includegraphics[scale=1]{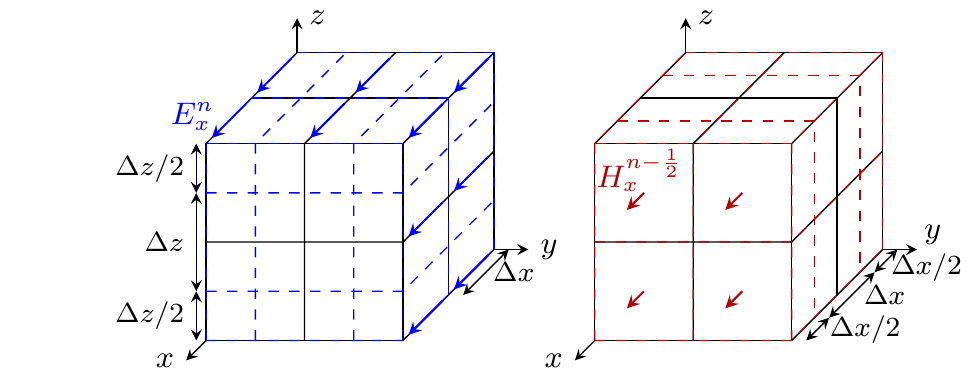}
	
	\caption{Illustration of the volumes (shown in dashed lines), associated with each electric (left panel) and magnetic (right panel) field sample in~\eqref{eq:update_Ex_internal}--\eqref{eq:update_Hx_face}.}\label{fig:volume_alloc}
\end{figure}

The temporal evolution of any electric field sample strictly inside the region, such as the $E_x|_{i+\halftxt,j,k}^n$ sample shown in Fig.~\ref{fig:update}a, is described using a conventional FDTD update equation~\cite{Gedney}
\begin{multline}
\dx\dy\dz
\left(\frac{\eps_x}{\dt}\!+\!\frac{\sigma_x}{2} \right) E_x\big|_{i+\half,j,k}^{n+1}\\
 = 
 \dx\dy\dz
 \left(\frac{\eps_x}{\dt}\!-\!\frac{\sigma_x}{2} \right)
 E_x\big|_{i+\half,j,k}^{n}\\
 + \dx\dz H_z\big|_{i+\half,j+\half,k}^{n+\half}
- \dx\dy H_y\big|_{i+\half,j,k+\half}^{n+\half} \\
-\dx\dz H_z\big|_{i+\half,j-\half,k}^{n+\half}
+\dx\dy H_y\big|_{i+\half,j,k-\half}^{n+\half}
\,,
\label{eq:update_Ex_internal}
\end{multline}
where $\sigma_x$ and $\eps_x$ are electrical conductivity and permittivity values, respectively, on the primary edge where the electric field is sampled. For clarity, we do not show the dependence of material properties on the location, although the proposed developments are valid in the most general case where all material properties are inhomogeneous. The iteration time step is denoted by $\dt$. Although $\dx$, the length of the edge where the sample is located, can be canceled in~\eqref{eq:update_Ex_internal}, we keep it for later derivations. The product $\dx\dy\dz$ is the volume associated with $E_x|_{i+\halftxt,j,k}^n$, as illustrated in Fig.~\ref{fig:volume_alloc}. We refer to the internal electric fields such as $E_x|_{i+\halftxt,j,k}^n$ as electric fields of Type~1.

A conventional FDTD equation for the electric fields on the region's faces, such as the South $E_x|_{i+\halftxt,1,k}^{n}$ sample in Fig.~\ref{fig:update}b, would involve some magnetic field samples that are beyond the considered region. The use of the magnetic fields outside the region would require assumptions on the nature of the subsystems connected to the FDTD grid. Instead, we write a modified equation for the South electric field samples using the hanging variables on the South boundary
\begin{multline}
\dx\dyhalf\dz\left(\frac{\eps_x}{\dt}\!+\!\frac{\sigma_x}{2} \right)E_x\big|_{i+\half,1,k}^{n+1} \\
= \dx\dyhalf\dz\left(\frac{\eps_x}{\dt}\!-\!\frac{\sigma_x}{2} \right)E_x\big|_{i+\half,1,k}^{n}\\
+ \dx\dz H_z\big|_{i+\half,\frac{3}{2},k}^{n+\half}
- \dx\dyhalf H_y\big|_{i+\half,1,k+\half}^{n+\half}\\
-\dx\dz U_z\big|_{i+\half,1,k}^{n+\half}
+\dx\dyhalf H_y\big|_{i+\half,1,k-\half}^{n+\half} \,.
\label{eq:update_Ex_south}
\end{multline}
Equation~\eqref{eq:update_Ex_south} is obtained from an FDTD-like approximation of Maxwell-Amp\`ere law on a half-cell containing $E_x|_{i+\halftxt,1,k}^{n}$. Using~\eqref{eq:update_Ex_south} instead of a conventional FDTD equation, we obtain a self-contained FDTD-like model for the fields in the region that does not involve field samples beyond its boundaries. This feature is crucial for investigating under which conditions the region is dissipative. It is also needed to obtain an FDTD model that can be connected to other subsystems, such as a grid of different resolution or a reduced model, where some magnetic field samples may not be available for a conventional FDTD equation. Samples like $E_x|_{i+\halftxt,1,k}^{n}$ are classified as Type~2, which are the electric field samples on the faces of the boundary. 

In a similar way, we write a modified FDTD equation for the samples located at the edges where two faces of the boundary come together. Those equations involve two hanging variables for each sample. For instance, in the case of the Bottom-South sample shown in Fig.~\ref{fig:update}c, the recurrence relation reads
\begin{multline}
\dx\dyhalf\dzhalf\left(\frac{\eps_x}{\dt}\!+\!\frac{\sigma_x}{2} \right)E_x\big|_{i+\half,1,1}^{n+1} \\
= \dx\dyhalf\dzhalf\left(\frac{\eps_x}{\dt}\!-\!\frac{\sigma_x}{2} \right)E_x\big|_{i+\half,1,1}^{n}\\
+ \dx\dzhalf H_z\big|_{i+\half,\frac{3}{2},1}^{n+\half}
- \dx\dyhalf H_y\big|_{i+\half,1,\frac{3}{2}}^{n+\half}\\
-\dx\dzhalf U_z\big|_{i+\half,1,1}^{n+\half} 
+\dx\dyhalf U_y\big|_{i+\half,1,1}^{n+\half}
\,.
\label{eq:update_Ex_edge}
\end{multline}
In~\eqref{eq:update_Ex_edge}, the $y$- and $z$-directed hanging variables, $U_y|_{i+\halftxt,1,1}^{n+\halftxt}$ and $U_z|_{i+\halftxt,1,1}^{n+\halftxt}$, respectively, are used to complete the line integral of magnetic field around $E_x|_{i+\halftxt,1,1}^{n}$. The sample $E_x|_{i+\halftxt,1,1}^{n}$ is an example of a Type~3 electric field, which is a field shared between two faces of the region's boundary. The three types of $E_y^n$ and $E_z^n$ fields and their recurrence relations are defined analogously to $E_x^n$ and to~\eqref{eq:update_Ex_internal}--\eqref{eq:update_Ex_edge}.

The recurrence relation for the magnetic $H_x^{n-\halftxt}$ samples that are strictly inside the region is the conventional update FDTD equation, which reads
\begin{multline}
\dx\dy\dz \frac{\mu_x}{\dt} H_x\big|_{i,j+\half,k+\half}^{n+\half} 
= \dx\dy\dz \frac{\mu_x}{\dt}H_x\big|_{i,j+\half,k+\half}^{n-\half}\\
-\dx\dz E_z\big|_{i,j+1,k+\half}^{n}
+\dx\dy E_y\big|_{i,j+\half,k+1}^{n} \\
+\dx\dz E_z\big|_{i,j,k+\half}^{n}
 -\dx\dy E_y\big|_{i,j+\half,k}^{n}
\,,
\label{eq:update_Hx_internal}
\end{multline}
where $\mu_x$ is the magnetic permeability at the sampling location. Equation for the boundary magnetic field sample $H_x|_{1,j+\halftxt,k+\halftxt}^{n-\halftxt}$, which is located at a secondary edge of length $\dx/2$ normal to the West boundary, involves a factor of $\dx/2$, as opposed to $\dx$
\begin{multline}
\dxhalf\dy\dz \frac{\mu_x}{\dt} H_x\big|_{1,j+\half,k+\half}^{n+\half} 
= \dxhalf\dy\dz \frac{\mu_x}{\dt}H_x\big|_{1,j+\half,k+\half}^{n-\half}\\
-\dxhalf\dz E_z\big|_{1,j+1,k+\half}^{n}
+\dxhalf\dy E_y\big|_{1,j+\half,k+1}^{n} \\
+\dxhalf\dz E_z\big|_{1,j,k+\half}^{n}
-\dxhalf\dy E_y\big|_{1,j+\half,k}^{n} \,,
\label{eq:update_Hx_face}
\end{multline}
and similarly for the boundary faces on the East side. Recurrence relations for $H_y^{n-\halftxt}$ and $H_z^{n-\halftxt}$ are obtained in a similar fashion.

The coefficients $\dx\dy\dz$ in~\eqref{eq:update_Hx_internal} and $(\dx/2)\dy\dz$ in~\eqref{eq:update_Hx_face} correspond to the dimensions of the volume associated with each $H_x^{n-\halftxt}$ sample, as shown in Fig.~\ref{fig:volume_alloc}.

\subsection{Compact Matrix Form}
\label{sec:mtx_form}
In order to shorten the notation, we write the equations~\eqref{eq:update_Ex_internal}--\eqref{eq:update_Hx_face} for all field samples in a matrix form. Collecting all recurrence relations for $H_x^{n-\halftxt}$ samples~\eqref{eq:update_Hx_internal} and~\eqref{eq:update_Hx_face} into matrix equations, we obtain
\begin{multline}
\Dlpx \DAx \frac{\Dmx}{\dt} \Hv_x^{n+\half} 
= \Dlpx \DAx \frac{\Dmx}{\dt} \Hv_x^{n-\half} \\
+\Dlpx \GzEy \Dly \Ev_y^n
-\Dlpx \GyEz \Dlz \Ev_z^n\,,
\label{eq:mtxHx}
\end{multline}
where vectors $\Hv_x^{n-\halftxt}$ and $\Ev_y^n$ contain all $H_x^{n-\halftxt}$ and $E_y^n$ samples in the region, respectively. The coefficient matrix $\Dmx$ is a diagonal matrix containing the magnetic permeability values on the edges where the corresponding elements of $\Hv_x^{n-\halftxt}$ are sampled. Matrix $\GzEy$, which consists of zeros, $+1$'s and $-1$'s, is a discrete $z$-derivative operator for $\Ev_y^n$. Similarly, $\GyEz$ is a discrete $y$-derivative operator for $\Ev_z^n$. Matrix $\Dlpx$ is a diagonal matrix containing the length of each secondary edge associated with samples in $\Hv_x^{n-\halftxt}$, namely $\dx$ for internal samples and $\dx/2$ for the samples on the East and West boundaries. Diagonal matrix $\DAx$ contains the area of primary faces where $H_x^{n-\halftxt}$ is sampled, which reduces to $\dy \dz \I_{N_{H_x}}$ for a uniform discretization, where $\I_m$ denotes an $m\times m$ identity matrix and $N_{H_x}$ is the number of $H_x^{n-\halftxt}$ samples in the region. As a result, the product $\Dlpx \DAx$ contains the volumes shown in the right panel of Fig.~\ref{fig:volume_alloc}, associated with each sample in $\Hv_x^{n-\halftxt}$. Matrices $\Dly$ and $\Dlz$ contain the length of the primary edges where the elements of $\Ev_y^n$ and $\Ev_z^n$ are sampled, respectively. These matrices reduce to $\dy \I_{N_{E_y}}$ and $\dz \I_{N_{E_z}}$ for uniform discretization.

In a similar fashion, we write in matrix form the recurrence relations for $\Hv_y^{n-\halftxt}$ and $\Hv_z^{n-\halftxt}$ as
\begin{multline}
\Dlpy \DAy \frac{\Dmy}{\dt} \Hv_y^{n+\half} 
= \Dlpy \DAy \frac{\Dmy}{\dt} \Hv_y^{n-\half} \\
-\Dlpy \GzEx \Dlx \Ev_x^n
+\Dlpy \GxEz \Dlz \Ev_z^n\,,
\label{eq:mtxHy}
\end{multline}
\begin{multline}
\Dlpz \DAz \frac{\Dmz}{\dt} \Hv_z^{n+\half} 
= \Dlpz \DAz \frac{\Dmz}{\dt} \Hv_z^{n-\half} \\
+\Dlpz \GyEx \Dlx \Ev_x^n
-\Dlpz \GxEy \Dly \Ev_y^n\,,
\label{eq:mtxHz}
\end{multline}
with the coefficients defined similarly to those in~\eqref{eq:mtxHx}.

From~\eqref{eq:update_Ex_internal},~\eqref{eq:update_Ex_south}, and~\eqref{eq:update_Ex_edge}, the recurrence relation for the $E_x^n$ samples can be written in matrix form as
\begin{multline}
\Dlx\DApx  \! \left( \!\frac{\Dex}{\dt}\! +\! \frac{\Dsx}{2} \!\right) \Ev_x^{n+1} 
= \Dlx\DApx \! \left(\!\frac{\Dex}{\dt} \!-\! \frac{\Dsx}{2}\!\right) \Ev_x^{n} \\
+ \Dlx\GzEx^T\Dlpy\Hv_y^{n+\half}
-\Dlx\GyEx^T\Dlpz\Hv_z^{n+\half}\\
+\Dlx\BExB \DlpyB \HyB^{n+\half} 
+\Dlx\BExN\DlpzN \HzN^{n+\half}\\
-\Dlx\BExT \DlpyT \HyT^{n+\half}
-\Dlx\BExS\DlpzS \HzS^{n+\half}\,,
\label{eq:mtxEx}
\end{multline}
where $\Dex$ and $\Dsx$ are diagonal matrices with the values of permittivity and electric conductivity, respectively, on the $x$-directed primary edges. The diagonal matrix $\DApx$ contains the area of the secondary faces where $E_x^n$ is sampled, which is equal to $\dy\dz$ for Type~1 nodes, $\dy\dz/2$ for Type~2 nodes, and $\dy\dz/4$ for Type~3 nodes. Matrix $\BExB$ consists of $0$'s and $1$'s that extract from the vector $\HyB^{n+\halftxt}$ the hanging variables corresponding to a Type~2 or Type~3 sample in $\Ev_x^n$. Matrices $\BExT$, $\BExS$, and $\BExN$ for the Top, South, and North faces serve an analogous purpose. In a similar manner, we write the matrix equations for $\Ev_y^n$ and $\Ev_z^n$.

Equations~\eqref{eq:mtxHx}--\eqref{eq:mtxHz} describing the recurrence relation for all regular magnetic field variables can be written as
\begin{equation}
\Dlp \D_{A} \frac{\D_{\mu}}{\dt} \Hv^{n+\half} = \Dlp \D_{A} \frac{\D_{\mu}}{\dt} \Hv^{n-\half}
-\Dlp \matr{C} \Dl \Ev^{n}\,,
\label{eq:mtxH}
\end{equation}
where vectors $\Ev^n$ and $\Hv^{n-\halftxt}$ collect all electric and regular magnetic field variables, respectively
\begin{equation}
\Ev^n = 
\begin{bmatrix}
\Ev_x^n\\
\Ev_y^n\\
\Ev_z^n
\end{bmatrix}\,,
\quad
\Hv^{n-\half} = 
\begin{bmatrix}
\Hv_x^{n-\half}\\
\Hv_y^{n-\half}\\
\Hv_z^{n-\half}
\end{bmatrix}\,.
\end{equation}
The diagonal matrices $\Dlp$ and $\D_{A}$ contain the secondary edge lengths and primary face areas associated with samples in $\Hv^{n-\halftxt}$. Matrix $\D_{\mu}$ contains permeability on the edges where samples in $\Hv^{n-\halftxt}$ are located. Coefficient matrix $\matr{C}$ is a discrete curl operator
\begin{equation}
\matr{C} = 
\begin{bmatrix}
\matr {0}		& 	-\GzEy	&  \phantomplus\GyEz	\\
\phantomplus\GzEx	& \matr {0}			& -\GxEz \\
-\GyEx 	&  \phantomplus\GxEy	&\matr {0}
\end{bmatrix}\,
\label{eq:C}
\end{equation}
and $\Dl$ is a diagonal matrix containing the length of the primary edges corresponding to the samples in $\Ev^n$.

Similarly, we obtain the following recurrence relation for the electric samples in the region by collecting~\eqref{eq:mtxEx} and similar equations for $\Ev_y^n$ and $\Ev_z^n$
	\begin{multline} 
	\Dl \DAp \left( \frac{\D_{\eps}}{\dt} \!+\!  \frac{\D_{\sigma}}{2}\right) \Ev^{n+1} 
	= 
	\Dl \DAp \left( \frac{\D_{\eps}}{\dt}
	\!-\!  \frac{\D_{\sigma}}{2}\right) \Ev^n \\
	+ \Dl \matr{C}^T \Dlp \Hv^{n+\half} 
	+ \Dl \Q \Sign \DlpU \vect{U}^{n+\half}\,,
	\label{eq:mtxE}
	\end{multline}
where $\DAp$ contains the secondary face areas associated with the samples in $\Ev^n$. Matrices $\D_{\eps}$, and $\D_{\sigma}$ contain permittivity and electrical conductivity on the edges that correspond to the samples in $\Ev^n$. Matrix $\Q$ contains $1$'s and $0$'s that extract the hanging variables from vector $\vect{U}^{n+\halftxt}$. For Type~1 samples in $\Ev^n$, the corresponding rows in $\Q$ contain only zeros. For a Type~2 sample in $\Ev^n$, the row of $\Q$ contains a single $1$ in the column corresponding to the hanging variable that is involved in an equation such as~\eqref{eq:update_Ex_south} for that electric field sample. For a Type~3 sample, the row of $\Q$ contains two $1$'s in the columns corresponding to the two hanging variables in the equation like~\eqref{eq:update_Ex_edge}. Matrix $\Sign$ is a diagonal matrix that selects the signs for the hanging variables according to~\eqref{eq:mtxEx}. The product $\Q \Sign$ has the following structure
\begin{equation}
\Q \Sign = 
\begin{bmatrix}
\matr{B}_{x} 	& \matr{0} 			& \matr{0} \\
\matr{0}		& \matr{B}_{y}	& \matr{0} \\
\matr{0}		& \matr{0}			& \matr{B}_{z}
\end{bmatrix}\,,
\label{eq:B_ES}
\end{equation}
where
\begin{subequations}
	\begin{eqnarray}
	&\matr{B}_{x} = \begin{bmatrix} \BExB & \BExN	&  -\BExT	& -\BExS	 \end{bmatrix} \,,& \label{eq:B_Ex}\\[2pt]
	&\matr{B}_{y} = \begin{bmatrix} \BEyW & \BEyT	&-\BEyE& -\BEyB		 \end{bmatrix} \,,&\label{eq:B_Ey}\\[2pt]
	&\matr{B}_{z} = \begin{bmatrix} \BEzS &\BEzE	&-\BEzN 	&-\BEzW  	 \end{bmatrix} \,.&\label{eq:B_Ez}
\end{eqnarray}
\end{subequations}
 Matrix $\DlpU$ in~\eqref{eq:mtxE} is a diagonal matrix containing the length of the edges where the hanging variables are sampled, namely $\dx$, $\dy$, or $\dz$ for the hanging variables collocated with Type~2 electric fields and $\dx/2$, $\dy/2$, or $\dz/2$ for the hanging variables associated with the electric fields of Type~3.

\subsection{Dynamical System Formulation}
\label{subsec:dyn_sys}

Matrix equations~\eqref{eq:mtxH} and \eqref{eq:mtxE} can be written in the form of a dynamical system as follows
\begin{subequations}
	\begin{eqnarray}
	(\R+\F) \x^{n+1}&=&(\R-\F) \x^{n} + \matr{B} \vect{u}^{n+\half}\,, \label{eq:sysA}\\
	\vect{y}^n&=&\Lm^T \x^n \,, \label{eq:sysB}
	\end{eqnarray}
\end{subequations}
where the state vector $\x^n$ contains the regular FDTD samples in the region
\begin{equation}
\x^{n} = 
\begin{bmatrix}
\Ev^n\\
\Hv^{n-\half}
\end{bmatrix}\,.
\label{eq:x}
\end{equation}
The hanging variables $U^{n+\halftxt}$ are not included in the state vector, and instead are regarded as the input of the FDTD region
\begin{equation}
\vect{u}^{n+\half} = \begin{bmatrix}
\Uv_{E_x}^{n+\half}\\[2pt]
\Uv_{E_y}^{n+\half}\\[2pt]
\Uv_{E_z}^{n+\half}
\end{bmatrix}\,,
\label{eq:u}
\end{equation}
where vector $\Uv_{E_x}^{n+\halftxt}$ collects the variables associated with the $E_x^n$ samples, namely the $U_y^{n+\halftxt}$ variables tangential to the Top and Bottom faces and $U_z^{n+\halftxt}$ variables for the North and South faces. Vectors $\Uv_{E_y}^{n+\halftxt}$ and $\Uv_{E_z}^{n+\halftxt}$ are defined similarly. The output vector $\vect{y}^n$ is defined as
\begin{equation}
\vect{y}^{n} = 
\begin{bmatrix}
\vect{Y}_{E_x}^{n}\\[1pt]
\vect{Y}_{E_y}^{n}\\[1pt]
\vect{Y}_{E_z}^{n}\\[1pt]
\end{bmatrix}\,,
\label{eq:y}
\end{equation}
where $\vect{Y}_{E_x}^{n}$ contains the $E_x^n$ samples of Types~2 and 3 associated with the hanging variables in $\Uv_{E_x}^{n+\halftxt}$. Type~2 samples are included in $\vect{Y}_{E_x}^n$ once, while each Type~3 sample is included in $\vect{Y}_{E_x}^n$ twice, at the positions of the two corresponding hanging variables in $\Uv_{E_x}^{n+\halftxt}$. Vectors $\vect{Y}_{E_y}^{n}$ and $\vect{Y}_{E_z}^{n}$ are defined similarly to $\vect{Y}_{E_x}^{n}$.

Matrices $\R$, $\F$, $\matr{B}$, and $\matr{L}^T$ in~\eqref{eq:sysA}--\eqref{eq:sysB} are given by
\begin{subequations}
\begin{eqnarray}
&\R 
=
\begin{bmatrix} 
\Dl \DAp \frac{\D_{\eps}}{\dt} & -\half \Dl \matr{C}^T \Dlp \\
-\half \Dlp \matr{C} \Dl & \Dlp \D_{A} \frac{\D_{\mu}}{\dt}
\end{bmatrix}\,, \label{eq:R}
\\
&\F = \begin{bmatrix}
\Dl \DAp \frac{\D_{\sigma}}{2} 	& -\half \Dl \matr{C}^T \Dlp \\
\half \Dlp\matr{C}\Dl				& \matr{0}
\end{bmatrix}\,,
\label{eq:F}
\end{eqnarray}
\begin{equation}
\matr{B} = \begin{bmatrix}
\Dl \Q\Sign \D_{l'_U}	\\
\matr{0}
\end{bmatrix}\,,
\label{eq:B}
\end{equation}
\begin{equation}
\Lm^T = \begin{bmatrix} \Q^T & \matr{0} \end{bmatrix} \,.
\label{eq:L}
\end{equation}
\end{subequations}

The definitions given in this section allow us to write the FDTD equations for a 3-D region in the compact matrix form~\eqref{eq:sysA}--\eqref{eq:sysB}. Remarkably, equations~\eqref{eq:sysA}--\eqref{eq:sysB} have the same structure as in the 2-D case~\cite{jnl-2017-fdtd-dissipative}. This fact allows us to generalize previous results valid in two dimensions to the most general 3-D case.

\section{Dissipativity of a 3-D FDTD Region}
\label{sec:fdtd_3d_dissipativity}

In this section, we derive dissipativity conditions for the 3-D FDTD system~\eqref{eq:sysA}--\eqref{eq:sysB} defined in Sec.~\ref{sec:dyn_sys} and explain their physical significance. 
\begin{definition}
According to the theory of dissipative systems~\cite{byrnes1994losslessness}, the discrete-time dynamical system~\eqref{eq:sysA}--\eqref{eq:sysB} is dissipative with the supply rate $s\left(\vect{y}^{n},\vect{u}^{n+\halftxt}\right)$ if we can find a function $\E$ that satisfies
\begin{subequations}
\begin{align}
	\storage{\x^{n}} &\ge 0 \,, \quad \forall \x^n \,,
\\
	\storage{\vect{0}} &= 0  \,,
	\\
	\phantom{sps}	\storage{\vect{x}^{n+1}} - \storage{\vect{x}^n} &\le \supply{\vect{y}^n, \vect{u}^{n+\half}} \,,  \, \forall \vect{u}^{n+\half}\, \forall n \,.
	\label{eq:dissipation_ineq}
\end{align}
	\label{eq:dissipation}
\end{subequations}
\end{definition}

The storage function $\storage{\x^n}$ quantifies the energy stored in the FDTD region and the supply rate $\supply{\vect{y}^n, \vect{u}^{n+\halftxt}}$ is the energy absorbed by the region through its boundaries between time points $n$ and $n\!+\!1$.

We define the storage function and the supply rate as
\begin{equation}
\storage{\vect{x}^n} = \frac{\dt}{2} \left (\vect{x}^n \right) ^T \matr{R} \vect{x}^n\,,
\label{eq:storage1}
\end{equation}
\begin{equation}
\supply{\vect{y}^n, \vect{u}^{n+\half}} = \dt \frac{\left( \vect{y}^n + \vect{y}^{n+1} \right)}{2}^T  \IO \vect{u}^{n+\half}\,.
\label{eq:supply1}
\end{equation}
The diagonal matrix $\IO$ in~\eqref{eq:supply1} is given by
\begin{equation}
\IO = \D_{l_Y}  \DlpU \Sign \,,
\label{eq:def_S}
\end{equation}
where $\D_{l_Y}$ contains the lengths of the primary edges on which the electric samples in $\vect{y}^n$ are taken. Expressions~\eqref{eq:storage1} and~\eqref{eq:supply1} generalize those from the 2-D case~\cite{jnl-2017-fdtd-dissipative}.

\subsection{Physical Meaning of the Supply Rate and Storage Function}

By substituting~\eqref{eq:x} and~\eqref{eq:R} into~\eqref{eq:storage1}, we can write the storage function as
\begin{multline}
\storage{\vect{x}^n} = 
\frac{1}{2} (\Ev^n)^T \Dl \DAp \D_{\eps} \Ev^n\\
+ \frac{\dt}{2} \left(\Hv^{n-\half}\right)^T  \left[ \Dlp\D_{A} \frac{\D_{\mu}}{\dt}\Hv^{n-\half}
- \Dlp \matr{C} \Dl \Ev^n \right] \,,
\label{eq:storage1p5}
\end{multline}
which, using~\eqref{eq:mtxH} to simplify the term inside the square brackets, reduces to
\begin{multline}
\storage{\vect{x}^n} = 
\frac{1}{2} (\Ev^n)^T \Dl \DAp \D_{\eps} \Ev^n\\
+ \frac{1}{2} \left(\Hv^{n-\half}\right)^T \Dlp\D_{A} \D_{\mu}\Hv^{n+\half}\,.
\label{eq:storage2}
\end{multline}
This expression for the stored energy has been proposed in~\cite{edelvik2004general} for FDTD/FIT. Equation~\eqref{eq:storage2} reveals the physical meaning of the storage function~\eqref{eq:storage1} as a discrete analogy of 
\begin{equation}
\half \iiint_{V} \! \left( \eps_x E_x^2  \!+ \eps_y E_y^2 \!+  \eps_z E_z^2  \!+ 
\mu_x H_x^2  \!+  \mu_y H_y^2 \!+   \mu_z H_z^2 \right) dV \,,
\label{eq:energy_continuous}
\end{equation}
which is the energy stored in electric and magnetic fields inside a given volume.

For the supply rate, we can substitute~\eqref{eq:def_S} into~\eqref{eq:supply1}, obtaining
\begin{equation}
s(\vect{y}^n, \vect{u}^{n+\half}) = \dt \frac{\left( \vect{y}^n + \vect{y}^{n+1} \right)}{2}^T \D_{l_Y}  \DlpU \Sign \vect{u}^{n+\half}\,.
\label{eq:supply2}
\end{equation}
The product $\D_{l_Y}  \DlpU$ is a diagonal matrix containing the area of boundary faces associated with each hanging variable and the corresponding electric field sample. Diagonal matrix P has a ${+1}$ on the faces where the Poynting vector points into the region and a ${-1}$ otherwise. As a result, \eqref{eq:supply2} is a discrete version of the Poynting integral over the region's boundary from time sample $n$ to $n\!+\!1$ 
\begin{equation}
\int_{n \dt}^{(n+1)\dt}  \oiint_{S} \vec{E} \times \vec{H} \cdot \hat{n} \ dA \ dt \,,
\end{equation}
where $\hat{n}$ is the inward unit normal vector. Therefore, supply rate~\eqref{eq:supply1} is the total energy that enters the region from its boundaries between time $n$ and time $n\!+\!1$.

\subsection{Dissipativity Conditions for a 3-D FDTD Region}

By substituting~\eqref{eq:storage1} and~\eqref{eq:supply1} into~\eqref{eq:dissipation_ineq}, we arrive at the following theorem, which gives some simple conditions for the dissipativity of~\eqref{eq:sysA}--\eqref{eq:sysB}.

\begin{theorem}
	\label{theorem:dissipativity}
	A system in the form~\eqref{eq:sysA}--\eqref{eq:sysB} is dissipative with $\storage{\vect{x}^n}$ in~\eqref{eq:storage1} as storage function and $\supply{\vect{y}^n, \vect{u}^{n+\halftxt}}$ in~\eqref{eq:supply1} as supply rate if
	\begin{subequations}
		\begin{align}
		 \matr{R} = \matr{R}^T &>  0\,,  \label{eq:dissip_cond1} \\
		 \matr{F} + \matr{F}^T &\ge 0\,, \label{eq:dissip_cond2}  \\
		\Lm \IO &=  \matr{B} \,.  \label{eq:dissip_cond3} 
		\end{align}
		\label{eq:dissip_cond}
	\end{subequations}
\end{theorem}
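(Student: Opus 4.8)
The plan is to verify the three conditions of Definition 1 directly from the definitions \eqref{eq:storage1}--\eqref{eq:supply1}, using the structure of the matrices $\R$, $\F$, $\matr{B}$, $\Lm$ in \eqref{eq:R}--\eqref{eq:L}. The first two properties of Definition 1 --- nonnegativity $\storage{\x^n}\ge 0$ and $\storage{\vect{0}}=0$ --- follow immediately: condition \eqref{eq:dissip_cond1} says $\R$ is symmetric positive definite, so the quadratic form $\frac{\dt}{2}(\x^n)^T\R\,\x^n$ is nonnegative and vanishes at the origin. The real content is the dissipation inequality \eqref{eq:dissipation_ineq}, which I would establish by computing $\storage{\x^{n+1}}-\storage{\x^n}$ explicitly and comparing it term by term with $\supply{\vect{y}^n,\vect{u}^{n+\half}}$.

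\textbf{Key steps.} First I would expand
\[
\storage{\x^{n+1}}-\storage{\x^n}=\frac{\dt}{2}\left[(\x^{n+1})^T\R\,\x^{n+1}-(\x^n)^T\R\,\x^n\right],
\]
and, using the identity $\x^{n+1}=\x^n+(\x^{n+1}-\x^n)$ together with the symmetry $\R=\R^T$ from \eqref{eq:dissip_cond1}, rewrite this as
\[
\frac{\dt}{2}\left(\x^{n+1}+\x^n\right)^T\R\left(\x^{n+1}-\x^n\right).
\]
Next I would use the update equation \eqref{eq:sysA} to substitute for $\R\,\x^{n+1}$ and $\R\,\x^n$: adding and subtracting appropriately gives $\R(\x^{n+1}-\x^n)=-\F(\x^{n+1}+\x^n)+\matr{B}\vect{u}^{n+\half}$ and $2\R\,\x^n=(\R-\F)\x^n+(\R-\F)\x^n$, so after rearrangement
\[
\storage{\x^{n+1}}-\storage{\x^n}=\frac{\dt}{2}\left(\x^{n+1}+\x^n\right)^T\left[-\F\left(\x^{n+1}+\x^n\right)+\matr{B}\vect{u}^{n+\half}\right].
\]
Splitting the $\F$-term into its symmetric part, $(\x^{n+1}+\x^n)^T\F(\x^{n+1}+\x^n)=\frac12(\x^{n+1}+\x^n)^T(\F+\F^T)(\x^{n+1}+\x^n)\ge 0$ by \eqref{eq:dissip_cond2}, lets me drop it with the correct sign, yielding
\[
\storage{\x^{n+1}}-\storage{\x^n}\le \frac{\dt}{2}\left(\x^{n+1}+\x^n\right)^T\matr{B}\,\vect{u}^{n+\half}.
\]
Finally I would recognize the right-hand side as the supply rate: since $\matr{B}=\Lm\IO$ by \eqref{eq:dissip_cond3} and $\vect{y}^n=\Lm^T\x^n$ from \eqref{eq:sysB}, we have $(\x^{n+1}+\x^n)^T\matr{B}=(\x^{n+1}+\x^n)^T\Lm\IO=(\vect{y}^{n+1}+\vect{y}^n)^T\IO$, which reproduces exactly \eqref{eq:supply1}. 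This closes the chain of inequalities.

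\textbf{Main obstacle.} None of the steps is deep; the bookkeeping obstacle is the careful manipulation that turns the difference of quadratic forms into the ``trapezoidal'' cross term $\frac{\dt}{2}(\x^{n+1}+\x^n)^T\R(\x^{n+1}-\x^n)$ and then feeds in \eqref{eq:sysA} without sign errors --- in particular making sure the $-\half\Dl\matr{C}^T\Dlp$ blocks that appear in \emph{both} $\R$ and $\F$ are handled consistently (they are deliberately placed so that $\R-\F$ and $\R+\F$ have the right off-diagonal structure matching \eqref{eq:mtxH}--\eqref{eq:mtxE}). I would therefore double-check the algebra against the scalar/2-D analogue in \cite{jnl-2017-fdtd-dissipative}, since the authors note that \eqref{eq:sysA}--\eqref{eq:sysB} has the same structure there, so the identical computation carries over verbatim once the matrices are identified.
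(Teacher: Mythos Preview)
Your proposal is correct and is precisely the argument the paper relies on: the paper itself does not reprove the theorem but defers to \cite{jnl-2017-fdtd-dissipative}, where the same computation---rewrite the energy increment as $\tfrac{\dt}{2}(\x^{n+1}+\x^n)^T\R(\x^{n+1}-\x^n)$ via $\R=\R^T$, substitute \eqref{eq:sysA} to get $\R(\x^{n+1}-\x^n)=-\F(\x^{n+1}+\x^n)+\matr{B}\vect{u}^{n+\half}$, drop the $\F+\F^T\ge 0$ term, and identify the remainder with the supply rate through $\matr{B}=\Lm\IO$---is carried out for the structurally identical 2-D system. Your derivation is self-contained and needs no further checking against the reference.
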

\begin{proof}
	See~\cite{jnl-2017-fdtd-dissipative}.
\end{proof}

The physical meaning of the first condition~\eqref{eq:dissip_cond1} can be better understood by applying the Schur complement~\cite{Boy94} to transform~\eqref{eq:dissip_cond1} into
\begin{equation}
\begin{cases}
\Dl \DAp \frac{\D_{\eps}}{\dt} > 0\\[6pt]
\Dlp \D_{A} \frac{\D_{\mu}}{\dt} - \frac{\dt}{4} \Dlp \matr{C} \Dl \D_{\eps}^{-1}\DAp^{-1}  \matr{C}^T \Dlp > 0 
\end{cases}\,.
\label{eq:R_Sch}
\end{equation}
The first matrix inequality in~\eqref{eq:R_Sch} is true because lengths, areas, and permittivities in the region are positive. Applying a congruence with matrix $2(\dt\Dlp\DA \Dm)^{-\halftxt}$ to the left hand side of the second inequality in~\eqref{eq:R_Sch}, we obtain an equivalent condition to~\eqref{eq:R_Sch}:
\begin{equation}
\frac{4}{\dt^2}\I_{N_H}- \Singvals^T \Singvals > 0\,,
\label{eq:R_Sch_b_2}
\end{equation}
where
\begin{equation}
\Singvals = \Dl^{\half}\D_{\eps}^{-\half} \DAp^{-\half} \matr{C}^T\Dlp^{\half}\Dm^{-\half}\DA^{-\half}  \,.
\end{equation}
Condition~\eqref{eq:R_Sch_b_2} holds when
\begin{equation}
\dt < \min_k{\left\{\frac{2}{s_k} \right\}}\,,
\label{eq:dissip_cond_sing_vals}
\end{equation}
where $s_k$ are the non-zero singular values of $\Singvals$. Thus, condition~\eqref{eq:dissip_cond1} can be seen as a generalized CFL limit, since it is applicable to the most general case of a lossy region with nonuniform material properties. Moreover, with a strategy similar to~\cite{edelvik2004general} we can show that a sufficient condition for~\eqref{eq:dissip_cond1} to hold is that the classical FDTD CFL limit~\cite{Gedney} is met
\begin{equation}
\dt < \frac{\sqrt{\mu \eps}}{\sqrt{ \frac{1}{\dx^2} + \frac{1}{\dy^2} + \frac{1}{\dz^2} }}\,,
\label{eq:CFL}
\end{equation}
where $\eps$ is the smallest primary edge permittivity in the region and $\mu$ is the smallest secondary edge permeability. When the CFL limit is violated, the energy stored in some cells is no longer bounded below by zero, which can make them capable of supplying unlimited energy to the rest of the system, potentially causing instability.
 
Condition~\eqref{eq:dissip_cond2} expands into
\begin{equation}
\F + \F^T=
\begin{bmatrix} 
\Dl \DAp \D_{\sigma} 	& \matr{0} \\
\matr{0}			& \matr{0}
\end{bmatrix} \ge 0\,,
\end{equation}
which holds when the conductivity on each edge is non-negative. 

The third condition~\eqref{eq:dissip_cond3} is always true. In order to see this, we expand $\Lm \IO$ as
\begin{equation}
\Lm \IO = \begin{bmatrix}\Q \D_{l_Y} \Sign \DlpU  \\ \matr{0} \end{bmatrix}\,.
\end{equation}
Right-multiplication by $\D_{l_Y}$ has the same effect on $\Q$ as left-multiplication by $\Dl$. Hence, $\Q \D_{l_Y} =  \Dl \Q$, and as a result $\Lm \IO = \matr {B}$.

In summary, Theorem~\eqref{theorem:dissipativity} shows that the FDTD system~\eqref{eq:sysA}--\eqref{eq:sysB} associated to an arbitrary region is dissipative if
\begin{enumerate}
	\item all conductivities are non-negative, as one may obviously expect;
	\item the CFL limit~\eqref{eq:dissip_cond1} is respected.
\end{enumerate}
If these two conditions are satisfied, the FDTD region can be arbitrarily interconnected with any other dissipative subsystem without violating stability.

\section{Systematic Method for Stability Enforcement}
\label{sec:method}

The developments in the previous sections provide a powerful approach to construct both simple and advanced FDTD schemes with guaranteed stability. The method is general since it is applicable to complicated setups involving multiple subsystems, such as FDTD grids with different resolution, multiple boundary conditions, circuit models, or reduced-order models. The approach involves the following steps:
\begin{enumerate}
	\item Hanging variables, which are seen as inputs, are defined at the boundary of each block. The corresponding electric field samples are regarded as outputs, as done in Sec.~\ref{sec:dyn_sys} for a 3D-FDTD region.
	\item If subsystems cannot be connected directly, for example because of different grid resolution, an interpolation rule is created, and viewed as a separate subsystem.
	\item Stability is enforced by ensuring that all subsystems are dissipative. For dynamical systems of the form~\eqref{eq:sysA}--\eqref{eq:sysB} this can be done using~\eqref{eq:dissip_cond1}--\eqref{eq:dissip_cond3}. Since the connection of dissipative systems is also dissipative~\cite{jnl-2016-tap-fdtdmorstable}, the resulting scheme is stable by construction.
	\item The most restrictive time step is taken to ensure that all subsystems are dissipative.
\end{enumerate}

The proposed approach significantly simplifies the creation of new FDTD schemes with guaranteed stability, since dissipativity conditions can be imposed on individual subsystems. In contrast, existing techniques to analyze and enforce stability, such as the energy~\cite{edelvik2004general} and iteration~\cite{Gedney, Remis} methods, require the analysis of the entire scheme, which can be a formidable task. A remarkable feature of the proposed approach is its modularity. Given a set of subsystems that have been proven to be dissipative, any of their combination is guaranteed to be stable. This feature is hoped to accelerate scientific research in the FDTD area, since it allows researchers to create new stable schemes without having to redo stability analysis for every change. The proposed modular framework is relevant also for commercial FDTD solvers, where users want to be able to combine available FDTD subsystems in the way most suitable to model their problem, while having an absolute guarantee of stability.

It should be noted that, although we use a matrix formulation to investigate the dissipativity of FDTD equations, the final scheme can be implemented with scalar FDTD equations for optimal efficiency.

\section{Application to Stable 3-D FDTD Subgridding}
\label{sec:subgridding}

As a demonstration of the stability framework in Sec.~\ref{sec:method}, we derive a stable subgridding algorithm for 3-D FDTD. In the proposed algorithm, a coarse grid is refined in selected regions to better resolve fine geometrical details. We denote the refinement factors as $r_x$, $r_y$, and $r_z$ in the $x$, $y$, and $z$ directions, respectively. The refinement factors are odd integers greater than one. 

For simplicity, we present the proposed method by considering the scenario in Fig.~\ref{fig:system_view}, where the interface between the coarse and fine grids is planar. In the Appendix, we discuss how corners can be similarly treated.

In order to develop a stable subgridding scheme with the method in Sec.~\ref{sec:method}, we view a subgridding algorithm as a connection of four subsystems: boundary conditions, the coarse grid, the fine grid, and the interpolation rule that relates fields at the interface of the two grids, as depicted in Fig.~\ref{fig:system_view}.

\begin{figure}[t]
	\centering
	\includegraphics[scale=0.95]{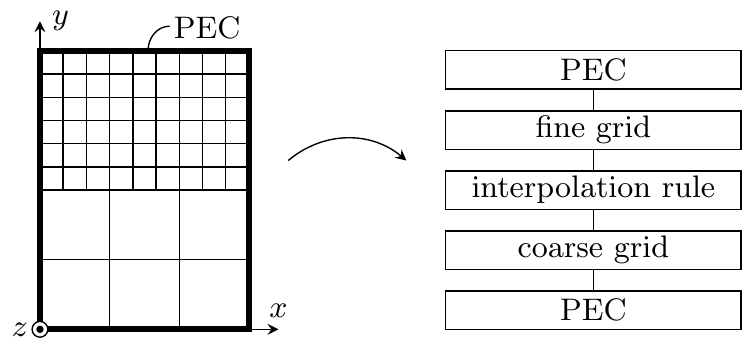}
	\caption{Illustration of the subgridding scenario. Left: $xy$ cross section of the primary grid. Right: subsystem interpretation.}
	\label{fig:system_view}
\end{figure}

\subsection{Interpolation Conditions}
\label{subsec:subgridding_interp}

We first describe the interpolation rule established between the coarse and fine grid fields, which is the core of every subgridding algorithm. In the scenario of Fig.~\ref{fig:system_view}, the interpolation rule has to properly relate the fields tangential to the North boundary of the coarse grid to the fields tangential to the South boundary of the fine grid. We discuss the interpolation rule in detail for the $E_z^n$-$U_x^{n+\halftxt}$ pairs. The derivations for $E_x^n$-$U_z^{n+\halftxt}$ pairs can be done analogously.

As shown in the left panel of Fig.~\ref{fig:face_interp}, we consider the portion of the interface surrounding one coarse electric field sample (denoted as $E_z^n$) and the corresponding hanging variable (denoted as $U_x^{n+\halftxt}$). The fine samples that fall in the same region in the case of $r_x\!=r_z\!=3$ are shown in the right panel of Fig.~\ref{fig:face_interp}.

The fine samples are numbered form $1$ to $9$, as the nodes where they are sampled. The fine electric fields $\hat{E}_z^n$ are set equal in the $z$ direction 
\begin{subequations}
	\begin{eqnarray}
	\Ehat_{z1}^n = \Ehat_{z4}^n = \Ehat_{z7}^n \,, &\quad \forall n\,, \label{eq:face_E_eq1} \\
	\Ehat_{z2}^n = \Ehat_{z5}^n = \Ehat_{z8}^n \,, &\quad \forall n\,, \label{eq:face_E_eq2} \\
	\Ehat_{z3}^n = \Ehat_{z6}^n = \Ehat_{z9}^n   \,,& \quad \forall n\,,\label{eq:face_E_eq3}
	\end{eqnarray}
\label{eq:face_E_eq}
\end{subequations}
and the fine hanging variables $\hat{U}_x^{n+\halftxt}$ are set equal in the $x$ direction
\begin{subequations}
	\begin{eqnarray}
	\Uhat_{x1}^{n+\half} = \Uhat_{x2}^{n+\half} = \Uhat_{x3}^{n+\half} \,, &\quad \forall n\,,  \\
	\Uhat_{x4}^{n+\half} = \Uhat_{x5}^{n+\half} = \Uhat_{x6}^{n+\half} \,, &\quad \forall n\,, \\
	\Uhat_{x7}^{n+\half} = \Uhat_{x8}^{n+\half} = \Uhat_{x9}^{n+\half} \,, &\quad \forall n\,,
	\end{eqnarray}
\label{eq:face_U_eq}
\end{subequations}
where the ``$\hat{\phantom{a}}$'' notation refers to the variables of the refined region.

The coarse samples $E_z^n$ and $U_x^{n+\halftxt}$ are forced to be equal to the average of the fine samples through the following interpolation rules
\begin{eqnarray}
&E_z^n = \frac{1}{r_x} \T_{r_x}^T \Evhat_z^n \,,
\quad &\forall n \,,
\label{eq:face_aver_E}
\\
&U_x^{n+\half} = \frac{1}{r_z} \T_{r_z}^T \Uvhat_x^{n+\half}\,,
\quad &\forall n\,,
\label{eq:face_aver_U}
\end{eqnarray}
where
\begin{equation}
\Evhat_z^n = 
\begin{bmatrix}
\Ehat_{z1}^n \\[2pt]
\Ehat_{z2}^n \\[2pt]
\Ehat_{z3}^n\\[2pt]
\end{bmatrix}\,, 
\quad
\Uvhat_x^{n+\half}=
\begin{bmatrix}
\Uhat_{x1}^{n+\half}\\[2pt]
\Uhat_{x4}^{n+\half}\\[2pt]
\Uhat_{x7}^{n+\half}\\[2pt]
\end{bmatrix}\,\label{eq:def_face}
\end{equation}
and $\T_m$ is an $m\times1$ matrix of ones. 

Interpolation rules \eqref{eq:face_E_eq}, \eqref{eq:face_U_eq}, \eqref{eq:face_aver_E}, and \eqref{eq:face_aver_U} draw inspiration from reciprocity theory for stable subgridding~\cite{Reciprocity}. In the $z$ direction, the pair of rules~\eqref{eq:face_E_eq} and~\eqref{eq:face_aver_U} is similar to~\cite{Reciprocity}, except the magnetic fields are related directly at the interface, as opposed to a plane that is offset from the interface as in~\cite{Reciprocity}. In the $x$ direction, the pair~\eqref{eq:face_U_eq} and~\eqref{eq:face_aver_E} is analogous, except the magnetic and electric interpolation rules are switched.

\begin{figure}[t]
	\centering
	 \includegraphics[scale=0.95]{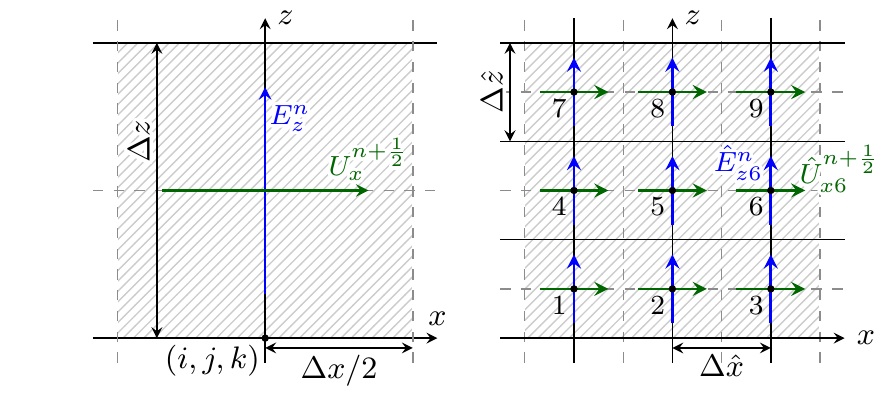}
	\caption{Coarse and fine sides of the subgridding interface in Sec.~\ref{sec:subgridding}.}
	\label{fig:face_interp}
\end{figure}

\subsection{Stability Proof}
\label{subsec:subgridding_stability}

The stability of the proposed subgridding algorithm can be proved with the dissipation theory in Sec.~\ref{sec:method}. As shown in Fig.~\ref{fig:system_view}, the proposed scheme can be seen as the connection of various subsystems, which must all be dissipative in order to ensure stability. As proven in Sec.~\ref{sec:fdtd_3d_dissipativity}, the coarse and fine grids are dissipative under their respective CFL limits. The perfect electric conductor (PEC) boundary condition is lossless, since imposing a null electric field means that no energy will be exchanged at any time. 

What is left to investigate is whether the interpolation rule is dissipative or not. We show that the interpolation rule in Sec.~\ref{subsec:subgridding_interp} is actually lossless for any time step, by proving that the supply rate~\eqref{eq:supply2} of the interpolation rule is zero. The supply rate of the interpolation rule is 
the summation of terms, each one associated to one hanging 
variable on its boundaries. The terms associated with the $x$-directed 
hanging variables $U_x^{n+\halftxt}$ and $\hat{\vect{U}}_x^{n+\halftxt}$ read
\begin{align}
&s_{U_x}^{n+\half} = +\dt \dx \dz \frac{E_z^n + E_z^{n+1}}{2}  U_x^{n+\half} \label{eq:supply_face0}\\
&- \dt \frac{\dx}{r_x} \frac{\dz}{r_z} \frac{\left(
	\T_{r_z} \!\! \otimes\Evhat_z^{n}  +  \T_{r_z} \!\!\otimes\Evhat_z^{n+1}  
	\right)}{2}^{\!T}  \!\!\!
\left(\Uvhat_x^{n+\half} \!\otimes\T_{r_x} \right)\,. \nonumber
\end{align}
Symbol ``$\otimes$'' in~\eqref{eq:supply_face0} denotes the Kronecker product~\cite{steeb}. The supply rate term~\eqref{eq:supply_face0} accounts for the contribution of the $U_x^{n+\halftxt}$ and $\hat{U}_x^{n+\halftxt}$ samples in Fig.~\ref{fig:face_interp} and the discussion for $z$-directed hanging variables is analogous. Using the properties of the Kronecker product~\cite{steeb}, we can rewrite~\eqref{eq:supply_face0} as
\begin{multline}
s_{U_x}^{n+\half} = +\dt \dx \dz \frac{E_z^n + E_z^{n+1}}{2}  U_x^{n+\half}\\
- \dt \frac{\dx}{r_x} \frac{\dz}{r_z} \T_{r_z}^T \Uvhat_x^{n+\half} \frac{\left(\Evhat_z^{n} + \Evhat_z^{n+1}\right)}{2}^{\!T} \T_{r_x} \,.
\label{eq:supply_face}
\end{multline}
Substituting~\eqref{eq:face_aver_E} and~\eqref{eq:face_aver_U} into~\eqref{eq:supply_face}, we see that the terms corresponding to the coarse and fine grids cancel each other
\begin{multline}
s_{U_x}^{n+\half} = +\dt \dx \dz \frac{E_z^n + E_z^{n+1}}{2}  U_x^{n+\half}\\
- \dt \frac{\dx}{r_x} \frac{\dz}{r_z} r_z U_x^{n+\half} \frac{\left(r_x E_z^{n} + r_x E_z^{n+1}\right)}{2}  =0\,.
\label{eq:supply_face2}
\end{multline}

Iterating the argument for all other coarse hanging variables and the corresponding groups of $r_x r_z$ hanging variables on the fine side, we conclude that the total supply rate to the interpolation rule is zero. Hence, the energy leaving the fine grid during a time step is equal to the energy that the interpolation rule supplies to the coarse grid. This makes the interpolation rule a lossless, and hence dissipative, subsystem. Since the coarse and fine grids are dissipative under their own CFL limits, the entire scheme is stable under the CFL limit of the fine grid, which is the most restrictive CFL limit.

\subsection{Practical Implementation}
\label{subsec:subgridding_update}

In this section we discuss how the proposed subgridding algorithm can be implemented. For updating all field samples strictly inside the two grids, one can use conventional FDTD equations. The update equation for the field samples on the refinement interface is instead derived from the interpolation rules~\eqref{eq:face_E_eq}, \eqref{eq:face_U_eq}, \eqref{eq:face_aver_E}, and~\eqref{eq:face_aver_U}.

As discussed in Sec.~\ref{subsec:eq_each_node}, each electric field sample $E_z^n$ on the North boundary of the coarse grid satisfies
\begin{multline}
\dx \dyhalf \left( \frac{\eps_z}{\dt} \!+\! \frac{\sigma_z}{2} \right) E_z^{n+1} \\
= \dx \dyhalf \left( \frac{\eps_z}{\dt} \!-\! \frac{\sigma_z}{2} \right) E_z^n 
- \dx U_x^{n+\half}\\
- \dyhalf  H_y\big|_{i-\half}^{n+\half} 
+ \dx H_x\big|_{j-\half}^{n+\half}
+ \dyhalf H_y\big|_{i+\half}^{n+\half} 
\,,
\label{eq:face_crs}
\end{multline}
where only those subscripts that are different from $i$, $j$, and $k\!+\!\halftxt$ are explicitly shown for clarity. Each electric field sample at a fine node ${\mhat \in \{1, 2,\dots, 9\}}$ with coordinates ${(\ihat, \jhat=\!1, \khat\!+\!\halftxt)}$ satisfies
\begin{multline}
\frac{\dx}{r_x} \frac{\dy}{2r_y} \left(\frac{\hat{\eps}_{z\mhat}}{\dt} \!+\! \frac{\hat{\sigma}_{z\mhat}}{2} \right) \hat{E}_{z\mhat}^{n+1} 
\\= \frac{\dx}{r_x} \frac{\dy}{2r_y} \left(\frac{\hat{\eps}_{z\mhat}}{\dt} \!-\! \frac{\hat{\sigma}_{z\mhat}}{2} \right) \hat{E}_{z\mhat}^{n} 
+ \frac{\dx}{r_x}  \hat{U}_{x\mhat}^{n+\half}\\
+ \frac{\dy}{2r_y} \hat{H}_{y\mhat}\big|^{n+\half}_{\ihat+\half} 
-\frac{\dx}{r_x} \hat{H}_{x\mhat}\big|^{n+\half}_{\jhat+\half} 
- \frac{\dy}{2r_y} \hat{H}_{y\mhat}\big|^{n+\half}_{\ihat-\half}
\,,
\label{eq:face_fine}
\end{multline}
where a sample $\hat{H}_{y\mhat}\big|^{n+\halftxt}_{\ihat+\halftxt}$ has the same coordinates as the node $\mhat$, except for the $x$ coordinate, which is $\ihat\!+\!\halftxt$. Similar notation is used for the other samples in~\eqref{eq:face_fine}.

Using~\eqref{eq:face_E_eq1} and averaging~\eqref{eq:face_fine} over the $\khat$ indexes in the shaded area of Fig.~\ref{fig:face_interp}, we obtain the following equation for the fine sample $\Ehat_{z1}^n$
\begin{multline}
\frac{\dx}{r_x} \frac{\dy}{2r_y} 
\frac{1}{r_z}\sum_{\mhat = 1,4,7}\! \left(
 \frac{\hat{\eps}_{z\mhat}}{\dt} \!+\! \frac{\hat{\sigma}_{z\mhat}}{2}  
\right)
\hat{E}_{z1}^{n+1} 
\\= \frac{\dx}{r_x} \frac{\dy}{2r_y} 
\frac{1}{r_z}\sum_{\mhat = 1,4,7}\! \left(
 \frac{\hat{\eps}_{z\mhat}}{\dt} \!-\! \frac{\hat{\sigma}_{z\mhat}}{2}  
\right) \hat{E}_{z1}^{n} \\
+ \frac{\dx}{r_x} \frac{1}{r_z}\sum_{\mhat = 1,4,7} \!\!\hat{U}_{x\mhat}^{n+\half}
+ \frac{\dy}{2r_y} \frac{1}{r_z}\sum_{\mhat = 1,4,7}\!\!\hat{H}_{y\mhat}\big|^{n+\half}_{\ihat+\half}\\
-\frac{\dx}{r_x} \frac{1}{r_z}\sum_{\mhat = 1,4,7} \!\!\hat{H}_{x\mhat}\big|^{n+\half}_{\jhat+\half} 
- \frac{\dy}{2r_y} \frac{1}{r_z}\sum_{\mhat = 1,4,7} \!\!\hat{H}_{y\mhat}\big|^{n+\half}_{\ihat-\half}
 \,,
\label{eq:face_fine_explicit_average0}
\end{multline}
and similarly for the other samples in $\Evhat_z^n$, namely for $\Ehat^n_{z2}$ and $\Ehat_{z3}^n$. Next, we use~\eqref{eq:face_aver_U} to replace the average of the fine hanging variables in~\eqref{eq:face_fine_explicit_average0} with $U_x^{n+\halftxt}$
\begin{multline}
\frac{\dx}{r_x} \frac{\dy}{2r_y} 
\frac{1}{r_z}\sum_{\mhat = 1,4,7}\! \left(
\frac{\hat{\eps}_{z\mhat}}{\dt} \!+\! \frac{\hat{\sigma}_{z\mhat}}{2}  
\right)
\hat{E}_{z1}^{n+1} 
\\= \frac{\dx}{r_x} \frac{\dy}{2r_y} 
\frac{1}{r_z}\sum_{\mhat = 1,4,7}\! \left(
\frac{\hat{\eps}_{z\mhat}}{\dt} \!-\! \frac{\hat{\sigma}_{z\mhat}}{2}  
\right) \hat{E}_{z1}^{n} 
+ \frac{\dx}{r_x} {U}_{x}^{n+\half}\\
+ \frac{\dy}{2r_y} \frac{1}{r_z}\sum_{\mhat = 1,4,7}\!\!\hat{H}_{y\mhat}\big|^{n+\half}_{\ihat+\half}
-\frac{\dx}{r_x} \frac{1}{r_z}\sum_{\mhat = 1,4,7} \!\!\hat{H}_{x\mhat}\big|^{n+\half}_{\jhat+\half} \\
- \frac{\dy}{2r_y} \frac{1}{r_z}\sum_{\mhat = 1,4,7} \!\!\hat{H}_{y\mhat}\big|^{n+\half}_{\ihat-\half}
\,.
\label{eq:face_fine_explicit_average}
\end{multline}
Because of~\eqref{eq:face_U_eq}, the same substitution can be done in the equations for $\Ehat^n_{z2}$ and $\Ehat_{z3}^n$.

Writing~\eqref{eq:face_fine_explicit_average} and the equations for $\Ehat^n_{z2}$ and $\Ehat_{z3}^n$ in matrix form, we obtain the following relation for the $\Evhat^n$ vector
\begin{multline}
\frac{\dx}{r_x} \frac{\dy}{2r_y}  \left(\frac{\Dhat_{\eps_z}}{\dt} \!+\! \frac{\Dhat_{\sigma_z}}{2} \right) \Evhat_z^{n+1} 
\\= \frac{\dx}{r_x} \frac{\dy}{2r_y}  \left(\frac{\Dhat_{\eps_z}}{\dt} \!-\! \frac{\Dhat_{\sigma_z}}{2} \right) \Evhat_z^{n}  
+ \frac{\dx}{r_x} \T_{r_x} U_x^{n+\half}\\
+ \frac{\dy}{2r_y}  \hat{\Hv}^{n+\half}_{y,\ihat+\half} 
 -\frac{\dx}{r_x}  \hat{\Hv}_{x, \jhat+\half}^{n+\half}
 - \frac{\dy}{2r_y}  \hat{\Hv}^{n+\half}_{y,\ihat-\half}
\,,
\label{eq:face_fine_mtx}
\end{multline}
where
\begin{subequations}
	\begin{eqnarray}
	\Hvhat_{x, \jhat+\half}^{n+\half} &=&
	\frac{1}{r_z}
	\begin{bmatrix}
	 \sum\limits_{\mhat=1,4,7} \hat{H}_{x\mhat}\big|^{n+\half}_{\jhat+\half}\\[6pt]
	 \sum\limits_{\mhat=2,5,8} \hat{H}_{x\mhat}\big|^{n+\half}_{\jhat+\half}\\[6pt]
	\sum\limits_{\mhat=3,6,9} \hat{H}_{x\mhat}\big|^{n+\half}_{\jhat+\half}\\[6pt]
	\end{bmatrix}\,,\label{eq:def_Hx_face}
	\\
	\Hvhat_{y,\ihat-\half}^{n+\half} &=&
	\frac{1}{r_z} 
	\begin{bmatrix}
	\sum\limits_{\mhat=1,4,7}
	\hat{H}_{y\mhat}\big|^{n+\half}_{\ihat-\half} \\[6pt]
	 \sum\limits_{\mhat=2,5,8}
	\hat{H}_{y\mhat}\big|^{n+\half}_{\ihat-\half}\\[6pt]
	 \sum\limits_{\mhat=3,6,9}
	\hat{H}_{y\mhat}\big|^{n+\half}_{\ihat-\half}\\[6pt]
	\end{bmatrix}\,,\label{eq:def_Hy_i_minus_face} 
	\end{eqnarray}
\end{subequations}
and similarly for $\Hvhat_{y,\ihat+\halftxt}^{n+\halftxt}$. \color{black}
Diagonal matrix $\Dhat_{\eps_z}$ in~\eqref{eq:face_fine_mtx} contains the permittivity of the half-cells adjacent to the fine grid's South boundary, averaged in a way similar to the magnetic fields in~\eqref{eq:def_Hx_face}--\eqref{eq:def_Hy_i_minus_face}. The conductivity matrix $\Dhat_{\sigma_z}$ is defined analogously.

Substituting~\eqref{eq:face_aver_E} into~\eqref{eq:face_crs} and multiplying the result on the left by $\T_{r_x}/r_x$, we obtain
\begin{multline}
\frac{\dx}{r_x} \dyhalf \left( \frac{\eps_z}{\dt} + \frac{\sigma_z}{2} \right) \frac{\one_{r_x} }{r_x}\Evhat_z^{n+1} \\
= \frac{\dx}{r_x} \dyhalf \left( \frac{\eps_z}{\dt} - \frac{\sigma_z}{2} \right)\frac{\one_{r_x} }{r_x}\Evhat_z^{n}
- \dx \frac{\T_{r_x}}{r_x} U_x^{n+\half}\\
- \dyhalf \frac{\T_{r_x}}{r_x} H_y\big|_{i-\half}^{n+\half} 
+ \dx\frac{\T_{r_x}}{r_x}   H_x\big|_{j-\half}^{n+\half}
+ \dyhalf \frac{\T_{r_x}}{r_x} H_y\big|_{i+\half}^{n+\half} 
\,,
\label{eq:face_crs2}
\end{multline}
where $\one_{\rx}$ denotes a square matrix of ones of size ${\rx \!\times \rx}$. Adding~\eqref{eq:face_fine_mtx} to~\eqref{eq:face_crs2} to eliminate $U_x^{n+\halftxt}$, we obtain an ${\rx \!\times \rx}$ matrix equation
\begin{multline}
\frac{\dx}{r_x}
\left(\frac{\dy}{2} + \frac{\dy}{2r_y} \right)
\left(  \frac{\matr {M}_{\eps_z}}{\dt} +  \frac{\matr {M}_{\sigma_z}}{2}\right)
\Evhat_z^{n+1} \\
=  
\frac{\dx}{r_x}
\left(\frac{\dy}{2} + \frac{\dy}{2r_y} \right)
\left(  \frac{\matr {M}_{\eps_z}}{\dt} -  \frac{\matr {M}_{\sigma_z}}{2}\right)
\Evhat_z^{n}\\
- \dyhalf \frac{\T_{r_x}}{r_x} H_y\big|_{i-\half}^{n+\half} 
+ \dx\frac{\T_{r_x}}{r_x}   H_x\big|_{j-\half}^{n+\half}
+ \dyhalf \frac{\T_{r_x}}{r_x} H_y\big|_{i+\half}^{n+\half}  \\
+ \frac{\dy}{2r_y}  \hat{\Hv}^{n+\half}_{y,\ihat+\half} 
-\frac{\dx}{r_x}  \hat{\Hv}_{x, \jhat+\half}^{n+\half}
- \frac{\dy}{2r_y}  \hat{\Hv}^{n+\half}_{y,\ihat-\half} \,,
\label{eq:face_update}
\end{multline}
with the vector of unknowns $\Evhat_z^{n+1}$, where matrices $\matr {M}_{\eps_z}$ and $\matr {M}_{\sigma_z}$ are given by
\begin{align}
\matr {M}_{\eps_z} &=&
\left(\frac{\dy}{2} + \frac{\dy}{2r_y} \right)^{-1}
\left( \dyhalf \frac{\one_{r_x} }{r_x}\eps_z + 
\frac{\dy}{2\ry}\Dhat_{\eps_z}
\right)\,,
\\
\matr {M}_{\sigma_z} &=& \left(\frac{\dy}{2} + \frac{\dy}{2r_y} \right)^{-1} \left(
\dyhalf \frac{\one_{r_x} }{r_x}\sigma_z
+  \frac{\dy}{2\ry}\Dhat_{\sigma_z}\right)  \,.
\end{align}
Equation~\eqref{eq:face_update} relates the electric field vector $\Evhat_z$ to the surrounding magnetic fields in both the coarse and the fine grids. We can see that this last step eliminates the hanging variables on the boundaries of the two grids, which serve as a temporary means to connect the two grids. Equation~\eqref{eq:face_update} is a recursive relation for the electric field samples at the interface, that can be used to compute $\Evhat_z^{n+1}$ knowing the electric and magnetic samples at the previous times. Although~\eqref{eq:face_update} is not fully explicit in terms of $\Evhat_z^{n+1}$, the matrix in front of this unknown is very small, having size ${r_x \times r_x}$. Moreover, it is a constant. Therefore, this matrix has to be inverted only once, leading to  an explicit update equation for $\Evhat_z^{n+1}$, which will be used for all time steps.
 
Overall, the proposed subgridding scheme consists of the following steps.
\begin{enumerate}
	\item Starting from $E^n$ and $H^{n+\halftxt}$, we update all electric fields strictly inside fine and coarse regions using standard FDTD equations such as~\eqref{eq:update_Ex_internal}.
	\item We update the electric fields on the interface between the two grids using~\eqref{eq:face_update} and a similar equation for $\Ev_x^n$ samples. Once~\eqref{eq:face_update} is solved, the fine electric samples $\Ehat_{z4}^{n+1}$,~$\hdots$~,~$\Ehat_{z9}^{n+1}$ can be found using the equalities~\eqref{eq:face_E_eq1}--\eqref{eq:face_E_eq3} and the coarse electric field sample can be updated using~\eqref{eq:face_aver_E}.
	\item All regular magnetic field samples are updated using standard FDTD equations, such as~\eqref{eq:update_Hx_internal} and~\eqref{eq:update_Hx_face}, to obtain the values at $n\!+\!(3/2)$.
\end{enumerate} 

The inclusion of material properties throughout the derivations rigorously guarantees stability in the case when arbitrary permittivities and conductivities are assigned to cells on each side of the subgridding interface. In contrast, the stability of many existing schemes with material traverse can only be verified numerically~\cite{ye2016temporal,kim2012subgridding3D} and some schemes exhibit inaccuracy when objects intersect the interface~\cite{wang2010analysis}.

\section{Numerical Examples}
\label{sec:examples}

The subgridding algorithm from Sec.~\ref{sec:subgridding} was implemented in MATLAB in order to test the validity of the proposed theory for FDTD stability and to assess the accuracy of the proposed subgridding method. Time-consuming portions of the code, such as the update equations for the fields inside each grid, were written using vectorized operations.

\subsection{Stability Verification}
\label{subsec:ex_cavity}

We verify the stability of the proposed algorithm by simulating the subgridding scenario shown in Fig.~\ref{fig:cavity_setup} for a million time steps. The setup consists of a 12~cm~$\!\times\!$~12~cm~$\!\times\!~$12~cm cavity with PEC walls. The cavity is discretized with a uniform coarse  mesh with $\dx\!=\!\dy\!=\!\dz \!=\!$ 1~cm. A 4~cm~$\!\times\!$~4~cm~$\!\times\!$~4~cm subgridding region with refinement ratio $r\!=\!5$ in all directions is placed at the center of the cavity. In order to test the correctness of stability analysis in the case when material properties vary, we randomly assign permittivity values to cells in both grids between the free space value $\eps_0$ and $3\eps_0$ using the $rand()$ function in MATLAB. In another simulation, we test if losses are properly handled by introducing, in addition to the varying permittivities, random conductivities between zero and 5$\times$10$^{-5}$~S/m. We excite the cavity using a Gaussian pulse with half-width at half-maximum bandwidth of 3.53~GHz. Time step is set to 99\% of the fine grid's CFL limit.

\begin{figure}[t!]
	
	\centering
	\includegraphics[scale=1]{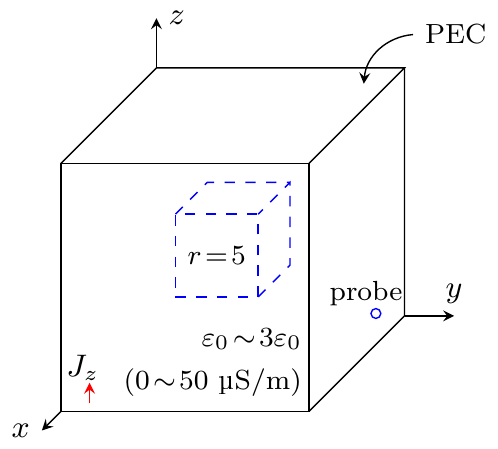}
	
	\caption{Setup of the stability verification test in Sec.~\ref{subsec:ex_cavity}.}
	\label{fig:cavity_setup}

\end{figure}

From the results, presented in Fig.~\ref{fig:results_cavity}, it can be seen that no signs of instability occur during the million time steps of the simulation, even if nonuniform materials traverse the coarse-fine interface. This result validates the correctness of the stability analysis framework presented in this paper. 

\begin{figure}[t!]
	
	\includegraphics[width = 0.95\columnwidth]{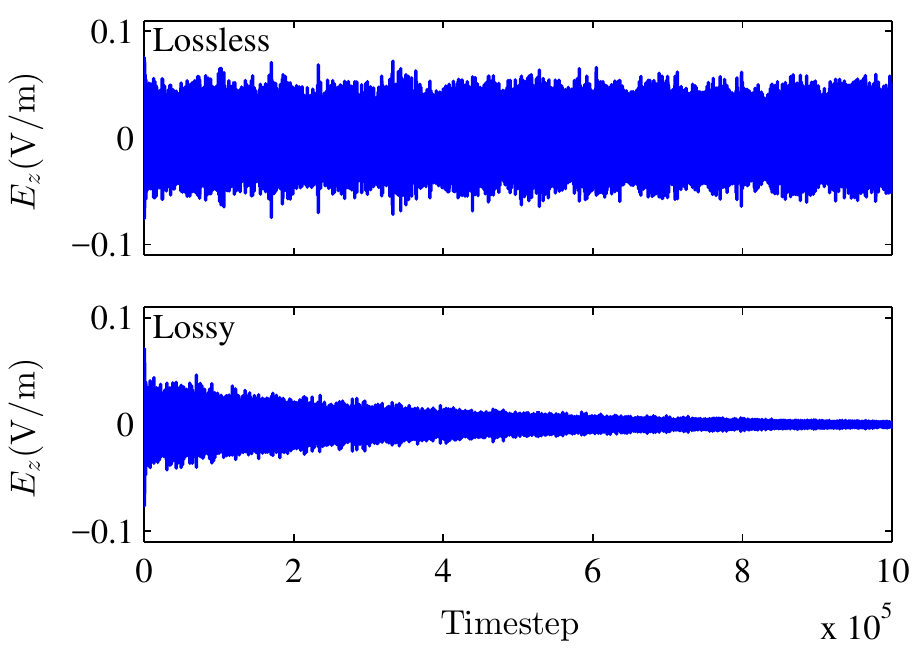}
	\caption{The $z$ component of electric field at the probe in the stability verification test of Sec.~\ref{subsec:ex_cavity}.}
	\label{fig:results_cavity}
\end{figure}

\subsection{Material Traverse}  
\label{subsec:ex_mt}

In this simulation, we test whether the proposed subgridding method produces consistent results when the subgridding interface traverses material boundaries, which is known to be a challenging scenario for subgridding schemes~\cite{wang2010analysis}. We use the setup shown in  Fig.~\ref{fig:setup_material_traverse}, where the subgridding region is placed in three different ways: in Case~1 and Case~3, a block of material traverses different sides of the interface. In the reference Case~2, the block is fully enclosed by the subgridding boundary. We perform the test for a copper block and for a lossy dielectric block ($\eps_r\!=\!3$, $\sigma \!=\! 0.05$~S/m). A 15-cm perfectly matched layer (PML) terminates the domain. The coarse grid is set to $\dx\!=\!\dy\!=\!\dz\!=\!$~1~cm and the subgridding region is refined by a factor of $r\!=\!3$ in each direction. A Gaussian source with half-width at half-maximum bandwidth of 1.02~GHz is used. The iteration time step is 99\% of the CFL limit of the fine grid.

\begin{figure}[t]
	\centering
	\includegraphics[]{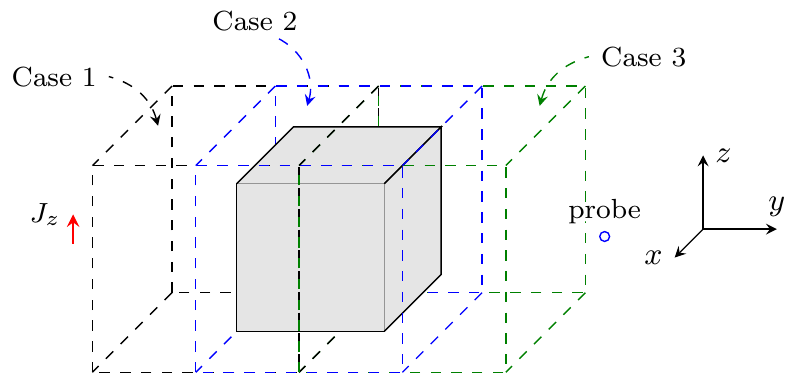}
	\caption{Setup of the material traverse test of Sec.~\ref{subsec:ex_mt}.}
	\label{fig:setup_material_traverse}
\end{figure}

Fig.~\ref{fig:results_material_traverse} shows the electric field recorded at the probe in the three cases. The two cases with material traverse are in very good agreement with the reference simulation, showing that the proposed algorithm does not lose accuracy when different materials traverse the subgridding boundary. The test is successful for both lossy dielectric and for copper.

	\begin{figure}[t!]
		
		\includegraphics[width=0.95\columnwidth]{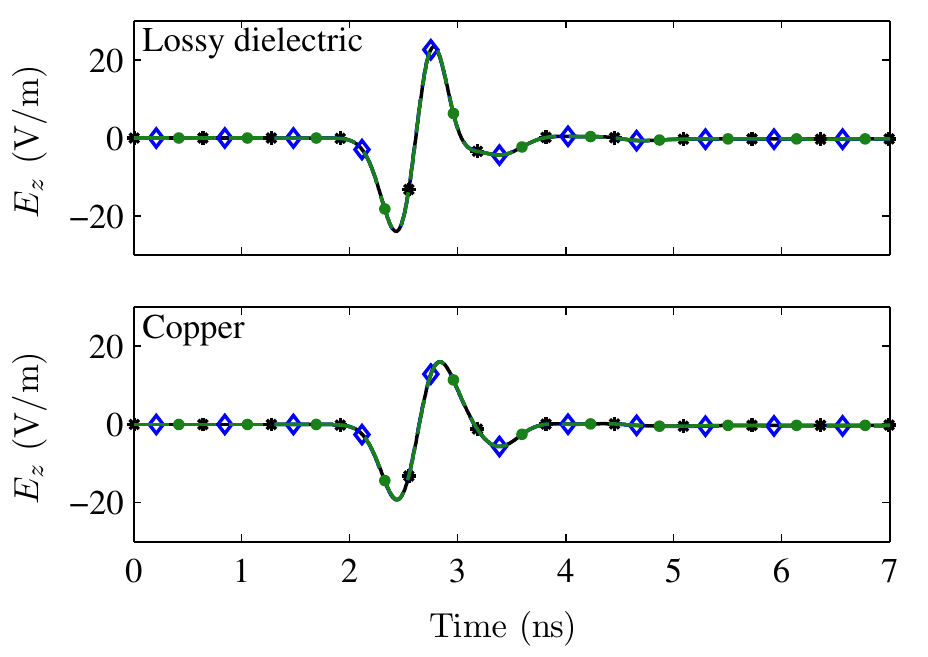}
		
		\caption{The $z$ component of electric field at the probe in material traverse test of Sec.~\ref{subsec:ex_mt} for Case 1 (\st{$\;\;\boldmath{\ast}\;\;$}), Case 2 ({\color{blue} \st{$\;$}$\;\diamond\;$\st {$\;$}}), and Case 3 ({\color{legendGreen} \st{$\;$}$\;\bullet\;$\st {$\;$}}).}
		\label{fig:results_material_traverse}
	\end{figure}

\subsection{Meta-Screen}
\label{subsec:ex_metascreen}

In this example, we use subgridding to study the three-slot meta-screen shown in Fig.~\ref{fig:setup_metascreen}. The setup for this simulation is similar to~\cite{Ludwig}, except for differences in discretization and screen material. In particular, we set the plate conductivity to 1.3$\times$10${\rm^6}$ S/m and thickness of the plate to 6 mils, based on the value in~\cite{Markley}. The slot in the center is 13.2~mm by 1.2~mm and the two satellite slots on the sides are 17~mm by 0.6~mm, placed at a center-to-center distance of 3~mm away from the central slot. These dimensions were designed to achieve subwavelength focusing at 10~GHz~\cite{Markley}. We place a line of probes 4.572~mm away from the meta-screen, which corresponds to approximately 15\% of the wavelength at 10~GHz -- distance at which measurements were performed in~\cite{Markley}. The simulation region was 25.2~mm$\times$19.2024~mm$\times$43.4~mm including PMLs.

As in~\cite{Ludwig}, we terminate the region before the meta-screen with PEC walls on the East and the West sides, and with perfect magnetic conductor walls on the Top and the Bottom sides. A five-cell PML layer is added on all sides in order to mimic an unbounded domain. 

The satellite slots have widths of only 2\% of the wavelength, which makes the problem intrinsically multiscale. Moreover,~\cite{Ludwig} reports resonance effects that make it difficult to resolve the structure near the design frequency. We choose this example to investigate the performance of the proposed subgridding scheme.

We set the coarse grid to $\dx \!=\! \dz \!=\!$~0.7~mm, and $\dy \!=\!$~0.1524~mm. In order to properly resolve the meta-screen structure and the fields, we refine a 9.8~mm$\times$2.5908~mm$\times$21~mm area around the slots with $r_x \!=\! r_z \!=\! 7$ and $r_y \!= 3$, using the proposed subgridding method. As a reference, we perform a conventional FDTD simulation with a nonuniform grid, where a region of dimensions 9.8~mm$\times$1.0668~mm$\times$21~mm is refined by $r_x \!=\! r_z \!=\! 7$ and $r_y \!= 3$ by means of varying $\dx$, $\dy$, and $\dz$. We also simulate the case where the entire structure is discretized with the coarse mesh. 

A modulated Gaussian waveform with 10~GHz central frequency and 8.24~GHz half-width at half-maximum bandwidth is used to excite the structure with a sheet of $x$-directed uniform current density on the South side. The proposed method and FDTD with nonuniform discretization are run at $\dt =$~0.1362~ps, whereas the coarse grid simulation has the iteration time step of 0.4810~ps. These time steps correspond to 99\% of the CFL limit of each mesh. The waveform is recorded for 5.5~ns, as in~\cite{Ludwig}.

\begin{figure}[t!]
	\centering
	\includegraphics[]{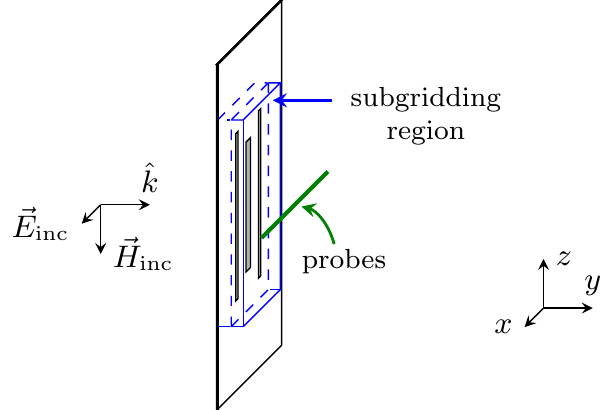}
	\caption{Setup of the meta-screen example of Sec.~\ref{subsec:ex_metascreen}.}
	\label{fig:setup_metascreen}
\end{figure}

Fig.~\ref{fig:results_metascreen} shows the amplitude of the $x$ component of electric field at 10~GHz, which was obtained with a Fourier transform of the time-domain waveforms. The results from subgridding are in very good agreement with the reference simulation. In contrast, the coarse resolution was insufficient to perform an accurate simulation of the meta-screen. This is not surprising, since any mesh with $\dx$ larger than 0.3~mm or $\dz$ larger than 0.1~mm cannot correctly resolve the meta-screen structure, let alone the fields around the slots.

Simulation times, as well as the number of primary FDTD cells are shown in Table~\ref{tab:performance_metascreen}. Subgridding almost doubles the speed-up compared to the nonuniform grid refinement, which is already much faster than the simulation with uniform grid. This is a result of the lower number of unknowns associated with subgridding. The guarantee of stability of the proposed method allows for the reduction of simulation time without sacrificing reliability. Moreover, the ability of the method to handle material traverse allows placing the subgridding region across the meta-screen.

\begin{figure}[t!]
	\centering
	{\includegraphics[width = 0.9\columnwidth]{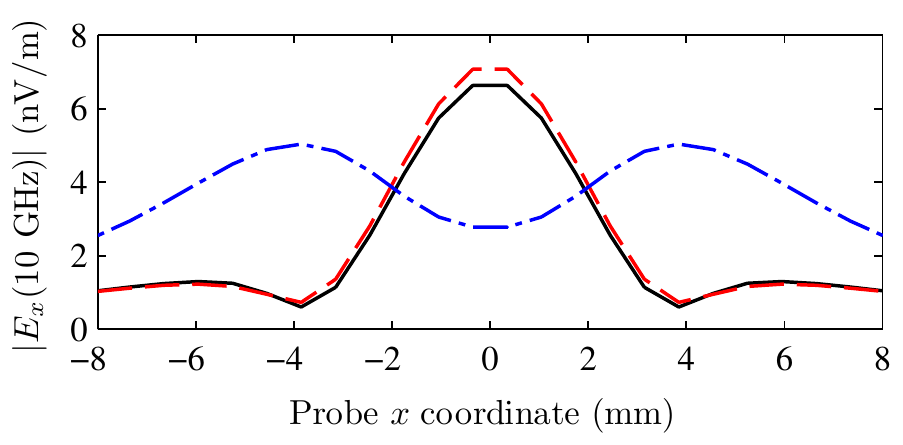}}
	\caption{Magnitude of $x$-directed electric field along the line of probes in the focusing meta-screen example of Sec.~\ref{subsec:ex_metascreen} for uniform coarse grid ({\color{blue} \st{$\;\;\;$}$\;$\st{$\;$}}), nonuniform grid (\st{$\;\;\;\;\;$}), and the proposed method ({\color{red} \st{$\;\;$}$\;$\st{$\;\;$}}).}
	\label{fig:results_metascreen}
\end{figure}

\begin{table}[t!]
	\centering
	\color{black}
	\caption{Computational Cost for Different Discretization in the Meta-Screen Example of Sec.~\ref{subsec:ex_metascreen}.}
\begin{threeparttable}
	\begin{tabular}{c c c c}
		\hline 
		Method 		& Number of primary cells 	& Simulation time & Speed-up\\
		\hline \hline
		Coarse		& 	\phantom{00,}281,232	&   \phantom{0}0.18 hours & 524.8\\ 
		\hline
		Fine 		& 	41,341,104		&  95.81 hours\tnote{a}& reference\\
		\hline
		Nonuniform	& 	\phantom{0}4,065,600 	& 11.46 hours	& 8.4\\
		\hline
		Proposed	&	\phantom{0}1,323,672	& \phantom{0}5.80 hours	&  16.5\\
		\hline
	\end{tabular}
\begin{tablenotes}
	\item[a] Estimated runtime from 30 time steps.
\end{tablenotes}
\end{threeparttable}
\color{yellow}
	\label{tab:performance_metascreen}
\end{table}

\section{Conclusion}
We proposed a dissipation theory for 3-D FDTD inspired by the theory of dissipative dynamical systems~\cite{willems1972dissipative}. The theory shows that the FDTD equations for the fields in a 3-D region can be interpreted as a dynamical system that is dissipative when time step satisfies the CFL limit.

The proposed dissipation theory provides a powerful and systematic method to create new FDTD schemes with guaranteed stability. By showing that each component of the scheme is dissipative (such as grids, boundary conditions, or embedded lumped components), the overall coupled scheme is guaranteed to be dissipative, and thus stable. This approach makes stability analysis simpler, more intuitive, and modular, since dissipation conditions are imposed on each component independently and are based on the familiar concept of energy.

With the new framework, we develop a stable 3-D subgridding scheme supporting material traverse. Numerical examples show good performance of the scheme when accelerating multiscale problems, as well as its ability to produce accurate and stable simulations when various materials traverse the subgridding interface. The dissipation framework is not restricted to subgridding algorithms, and could be applied to developing more advanced schemes in the future. As a proof of concept, we have applied the dissipation framework in 2-D to guarantee the stability of FDTD coupled to reduced-order models~\cite{jnl-2016-tap-fdtdmorstable}.

\appendix

\color{black}

\label{appendix:edges}

\newcommand{\phprime}{\phantom{'}}

In this section, we handle the case when the fine grid is surrounded by the coarse grid on all six sides. In this scenario, the locations where two faces of the boundary come together require a special treatment. Fig.~\ref{fig:edge_interp} shows the portion of the fine grid's boundary on its South-West edge, which is discussed in detail in this section.

\subsection{Interpolation Conditions}
\label{subsec:edge_interp}

Unlike in Sec.~\ref{sec:subgridding}, the edge case involves two sets of hanging variables: those in the $y$ direction on the West side and those in the $x$ direction on the South side. As in Sec.~\ref{sec:subgridding}, we interpolate the fields on the shaded portion of the interface so that the $z$-directed electric field samples are equal in the $z$ direction
\begin{subequations}
	\begin{eqnarray}
	\Ehat_{z1\phprime}^n = \Ehat_{z3\phprime}^n = \Ehat_{z5\phprime}^n \,, &\quad \forall n\,, \label{eq:edge_E1_eq} \\
	\Ehat_{z2\phprime}^n = \Ehat_{z4\phprime}^n = \Ehat_{z6\phprime}^n \,, &\quad \forall n\,, \label{eq:edge_E2_eq} \\
	\Ehat_{z2'}^n = \Ehat_{z4'}^n = \Ehat_{z6'}^n   \,,& \quad \forall n\,.\label{eq:edge_E3_eq}
	\end{eqnarray}
	\label{eq:edge_E_eq}
\end{subequations}
The fine $y$-directed hanging variables are set equal in the $y$ direction and the fine $x$-directed hanging variables are set equal in the $x$ direction
\begin{subequations}
	\begin{eqnarray}
	\Uhat_{y1\phprime}^{n+\half} = \Uhat_{y2\phprime}^{n+\half}  \,, \quad &
	\Uhat_{x1\phprime}^{n+\half} = \Uhat_{x2'}^{n+\half} \,, \quad &
	\forall n\,,  \label{eq:edge_U1_eq}\\
	\Uhat_{y3\phprime}^{n+\half} = \Uhat_{y4\phprime}^{n+\half}  \,, \quad &
	\Uhat_{x3\phprime}^{n+\half} = \Uhat_{x4'}^{n+\half} \,, \quad & 
	\forall n\,, \label{eq:edge_U2_eq} \\
	\Uhat_{y5\phprime}^{n+\half} = \Uhat_{y6\phprime}^{n+\half}  \,, \quad &
	\Uhat_{x5\phprime}^{n+\half} = \Uhat_{x6'}^{n+\half} \,, \quad &
	\forall n\,.   \label{eq:edge_U3_eq} 
	\end{eqnarray}
\end{subequations}      
We collect the distinct electric field samples into vectors as follows
\begin{equation}
	\Ehat_{z(SW)}^n = \hat{E}_{z1}^n\,,
	\quad
	\Evhat_{z(W)}^n =
	\hat{E}_{z2}^n\,,
	\quad
	\Evhat_{z(S)}^n =
	\hat{E}_{z2'}^n\,,
	\label{eq:def_edge_E}
\end{equation}
where vector $\Evhat_{z(W)}^n$ contains the distinct samples on the West half of the shaded region in Fig.~\ref{fig:edge_interp}, except for the South-West sample at ${(1,1,\khat\!+\!\halftxt)}$ and similarly for $\Evhat_{z(S)}^n$ on the South side. Variable $\Ehat_{z(SW)}^n$ is always the single sample at ${(1,1,\khat\!+\!\halftxt)}$, which is excluded from  $\Evhat_{z(W)}^n$ and $\Evhat_{z(S)}^n$. In general, $\Evhat_{z(W)}^n$ and $\Evhat_{z(S)}^n$ are vectors of size ${(r_y\!-\!1)/2}$ and ${(r_x\!-\!1)/2}$, respectively, although in the $r_x\! =\! r_y \!= 3$ example they reduce to scalar quantities.

Similarly, the distinct hanging variables in~\eqref{eq:edge_U1_eq}--\eqref{eq:edge_U3_eq} are collected into vectors $\Uvhat_y^{n+\halftxt}$ and $\Uvhat_x^{n+\halftxt}$
	\begin{equation}
	\Uvhat_y^{n+\half} = \begin{bmatrix}
	\Uhat_{y1}^{n+\half}&
	\Uhat_{y3}^{n+\half}&
	\Uhat_{y5}^{n+\half}
	\end{bmatrix}^T\,,
	\label{eq:def_edge_Uy}
	\end{equation}
	\begin{equation}
	\Uvhat_x^{n+\half} = \begin{bmatrix}
	\Uhat_{x1}^{n+\half}&
	\Uhat_{x3}^{n+\half}&
	\Uhat_{x5}^{n+\half}
	\end{bmatrix}^T\,.
	\label{eq:def_edge_Ux}
	\end{equation}

Analogously to~\eqref{eq:face_aver_E}, we force the coarse electric field sample $E_z^n$ to be equal to the average of $\Ehat_z^n$ over the West portion of the shaded area in Fig.~\ref{fig:edge_interp}
	\begin{equation}
	\frac{\dy}{2} E_z^n = \frac{\dy}{2r_y} \Ehat_{z(SW)}^n + \frac{\dy}{r_y} \T_{W}^T \Evhat_{z(W)}^n\,,\quad \forall n\,,
	\label{eq:edge_aver_E_W}
	\end{equation}
	with the weighting coefficients chosen to correctly account for the area of boundary faces associated with each $\Ehat_{z}^n$ sample. The column vector of ones $\T_{W}$ has size ${(r_y-1)/2}$. A similar condition is applied on the South portion of the shaded area in Fig.~\ref{fig:edge_interp}
	\begin{equation}
	\frac{\dx}{2}  E_z^n = \frac{\dx}{2 r_x} \Ehat_{z(SW)}^n + \frac{\dx}{r_x} \T_{S}^T \Evhat_{z(S)}^n\,,\quad \forall n\,.
	\label{eq:edge_aver_E_S}
	\end{equation}

As in the planar case, the coarse hanging variables are forced to be equal to the average of the corresponding fine hanging variables
	\begin{eqnarray}
	U_y^{n+\half} &=& \frac{1}{r_z}\T_{r_z}^T \Uvhat_y^{n+\half}\,,\quad \forall n\,,\label{eq:edge_aver_Uy} \\
	U_x^{n+\half} &=& \frac{1}{r_z}\T_{r_z}^T \Uvhat_x^{n+\half}\,, \quad \forall n\,.\label{eq:edge_aver_Ux}	
	\end{eqnarray}

\begin{figure}

	\centering
	
	\begin{tabular}{rc}
		\hspace*{-1cm} \includegraphics[scale=0.95]{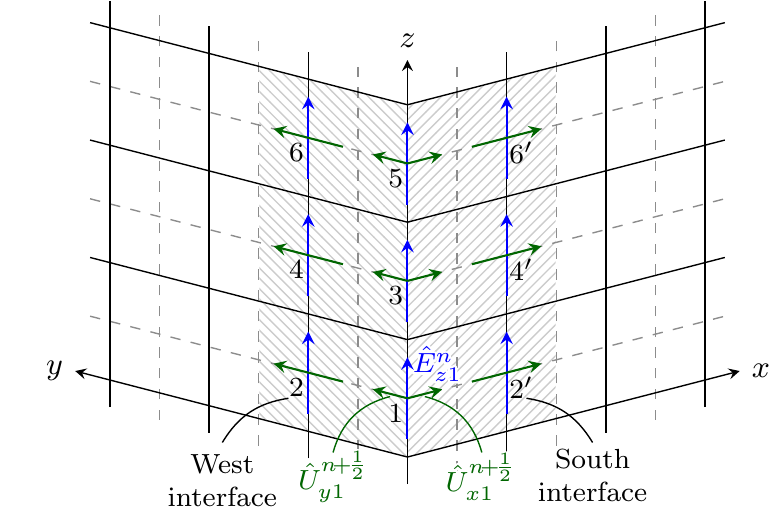}
	\end{tabular}

	\caption{Fine side of the subgridding interface in the Appendix.}
	\label{fig:edge_interp}
\end{figure}

\subsection{Stability Proof}
In this section, we show that with the interpolation rule in Sec.~\ref{subsec:edge_interp} for the edges of the boundary, the interpolation rule remains a lossless subsystem. Consider the portion of the coarse-fine interface consisting of the South shaded area in Fig.~\ref{fig:edge_interp} on the fine side and of the adjacent $\dx/2 \times \dz$ boundary face on the coarse side. For this portion of the interface, the net contribution of the $x$-directed hanging variables to the supply rate from the fine and coarse grids to the interpolation rule boils down to
\begin{multline}
s_{U_x}^{n+\half} = + \dt \dxhalf \dz \frac{E_z^n + E_z^{n+1}}{2} U_x^{n+\half} \\
- \dt \frac{\dx}{r_x} \frac{\dz}{r_z} \T^T_{r_z} \Uvhat_x^{n+\half} \frac{\left(\Evhat_{z(S)}^n + \Evhat_{z(S)}^{n+1}\right)}{2}^{\!T} \!\! \T_{S} \\
- \dt \frac{\dx}{2r_x} \frac{\dz}{r_z} \T_{r_z}^T \Uvhat_x^{n+\half}\frac{\Ehat_{z(SW)}^n+\Ehat_{z(SW)}^{n+1}}{2}  \,,
\label{eq:supply_edge}
\end{multline}
analogously to~\eqref{eq:supply_face}. Substituting~\eqref{eq:edge_aver_Ux}, \eqref{eq:supply_edge} becomes
\begin{multline}
s_{U_x}^{n+\half} = + \dt \dxhalf \dz \frac{E_z^n + E_z^{n+1}}{2} U_x^{n+\half} 
-\dt \frac{\dz}{r_z} r_z U_x^{n+\half}\\
\times \Bigg(
\frac{\dx}{r_x}  \frac{\left(\Evhat_{z(S)}^n + \Evhat_{z(S)}^{n+1}\right)}{2}^{\!T} \!\!\T_{S} 
+ \frac{\dx}{2r_x}  \frac{\Ehat_{z(SW)}^n+\Ehat_{z(SW)}^{n+1}}{2} \Bigg)\,,
\label{eq:supply_edge2}
\end{multline}
which is zero because of~\eqref{eq:edge_aver_E_S}. The proof for other hanging variables near the edges of the boundary can be done with a similar argument. This shows that even in presence of corners, the interpolation rule is a lossless system, since its supply rate is zero.

\subsection{Practical Implementation}
The update equations for the electric fields in the shaded area in Fig.~\ref{fig:edge_interp} are derived using the modified FDTD equations on each side of the interface and the interpolation conditions~\eqref{eq:edge_E1_eq}--\eqref{eq:edge_E3_eq}, \eqref{eq:edge_U1_eq}--\eqref{eq:edge_U3_eq}, \eqref{eq:edge_aver_E_W}, \eqref{eq:edge_aver_E_S}, \eqref{eq:edge_aver_Uy}, and \eqref{eq:edge_aver_Ux}. The coarse $E_z^n$ field at ${(i,j,k\!+\!\halftxt)}$ needs to satisfy the modified FDTD equation
\begin{multline}
3\dxhalf \dyhalf  \left( \frac{\eps_z}{\dt} \!+\! \frac{\sigma_z}{2} \right) E_z^{n+1} \\
= 3\dxhalf \dyhalf  \left( \frac{\eps_z}{\dt} \!-\! \frac{\sigma_z}{2} \right) E_z^{n}
+ \dyhalf U_y^{n+\half}\\
- \dxhalf H_x\big|_{j+\half}^{n+\half} 
- \dy H_y\big|_{i-\half}^{n+\half} 
+ \dx H_x\big|_{j-\half}^{n+\half} \\
+ \dyhalf H_y\big|_{i+\half}^{n+\half}
- \dxhalf U_x^{n+\half} \,.
\label{eq:edge_crs}
\end{multline}
Applying a similar averaging procedure as in Sec.~\ref{subsec:subgridding_update}, we obtain the following relations for the fine electric field vectors defined in~\eqref{eq:def_edge_E}
\begin{subequations}
	\begin{multline}
	\frac{\dx}{2 r_x} \frac{\dy}{2r_y} \left( \frac{\hat{\eps}_{z(SW)}}{\dt} \!+\! \frac{\hat{\sigma}_{z(SW)}}{2} \right) \Ehat_{z(SW)}^{n+1}\\
	= \frac{\dx}{2 r_x} \frac{\dy}{2r_y} \left( \frac{\hat{\eps}_{z(SW)}}{\dt} \!-\! \frac{\hat{\sigma}_{z(SW)}}{2} \right) \Ehat_{z(SW)}^{n}
	+ \frac{\dx}{2 r_x} U_x^{n+\half} \\
	+ \frac{\dy}{2r_y} \hat{H}_{y(SW),\ihat+\half}^{n+\half}
	- \frac{\dx}{2 r_x} \hat{H}_{x(SW),\jhat+\half}^{n+\half} 
	- \frac{\dy}{2r_y} U_y^{n+\half}
	\,,
	\label{eq:edge_SW}
	\end{multline}
	\begin{multline}
	\frac{\dx}{2 r_x}\frac{\dy}{r_y} \left(\frac{\Dhat_{\eps_{z(W)}}}{\dt} \!+\! \frac{\Dhat_{\sigma_{z(W)}}}{2} \right) \Evhat_{z(W)}^{n+1}\\
	= \frac{\dx}{2 r_x}\frac{\dy}{r_y} \left(\frac{\Dhat_{\eps_{z(W)}}}{\dt} \!-\! \frac{\Dhat_{\sigma_{z(W)}}}{2}  \right) \Evhat_{z(W)}^{n}
	+\frac{\dx}{2 r_x} \Hvhat_{x(W),\jhat-\half}^{n+\half}\\
	+\frac{\dy}{r_y} \Hvhat_{y(W),\ihat +\half}^{n+\half}
	-\frac{\dx}{2 r_x} \Hvhat_{x(W),\jhat+\half}^{n+\half}
	-\frac{\dy}{r_y} \T_{W} U_y^{n+\half}\,,
	\label{eq:edge_W}
	\end{multline}
	\begin{multline}
	\frac{\dx}{r_x}\frac{\dy}{2r_y} \left(\frac{\Dhat_{\eps_{z(S)}}}{\dt} \!+\! \frac{\Dhat_{\sigma_{z(S)}}}{2} \right) \Evhat_{z(S)}^{n+1}\\
	= \frac{\dx}{r_x}\frac{\dy}{2r_y} \left(\frac{\Dhat_{\eps_{z(S)}}}{\dt} 
	\!-\! \frac{\Dhat_{\sigma_{z(S)}}}{2}  \right) \Evhat_{z(S)}^{n}
	+\frac{\dy}{2r_y} \Hvhat_{y(S),\ihat+\half}^{n+\half}\\
	-\frac{\dx}{r_x} \Hvhat_{x(S),\jhat+\half}^{n+\half}
	-\frac{\dy}{2r_y} \Hvhat_{y(S),\ihat-\half}^{n+\half} 
	+\frac{\dx}{r_x} \T_{S}U_x^{n+\half}\,,
	\label{eq:edge_S}
	\end{multline}
\end{subequations}
where
\begin{subequations}
	\begin{eqnarray}
	\hat{H}_{x(SW),\jhat+\half}^{n+\half} &=& \frac{1}{r_z}\sum_{\phantom {'}\mhat = 1,3,5\phantom {''}} \!\!\!\!\hat{H}_{x\mhat}\big|_{\jhat+\half}^{n+\half}\,,
	\label{eq:Hdef_averaged_edge_first}
	\\
	\Hvhat_{x(W),\jhat-\half}^{n+\half} &=&
	\frac{1}{r_z}\sum_{\phantom {'}\mhat=2,4,6\phantom {''}}
	\!\!\!\!\hat{H}_{x\mhat}\big|_{\jhat-\half}^{n+\half}\,,
	\\
	\Hvhat_{x(W),\jhat+\half}^{n+\half} &=&
	\frac{1}{r_z}\sum_{\phantom {'}\mhat=2,4,6\phantom {''}}
	\!\!\!\!\hat{H}_{x\mhat}\big|_{\jhat+\half}^{n+\half}\,,
	\\
	\Hvhat_{x(S),\jhat+\half}^{n+\half} &=&
	\frac{1}{r_z}\sum_{\mhat=2',4',6'}
	\!\!\!\!\hat{H}_{x\mhat}\big|_{\jhat+\half}^{n+\half}\,,
	\label{eq:Hdef_averaged_edge_last}
	\end{eqnarray}
\end{subequations}
where $(\ihat\!=\!1,\jhat\!=\!1,\khat\!+\!\halftxt)$ is the coordinate of node~$\mhat$. In general, vectors $\Hvhat_{x(W),\jhat\pm\halftxt}^{n+\half}$ and $\Hvhat_{y(W),\ihat+\halftxt}^{n+\halftxt}$ have size ${(r_y\!-\!1)/2}$, with each element corresponding to a sample in $\Evhat_{z(W)}$. Quantities $\hat{H}_{y(SW),\ihat+\halftxt}^{n+\halftxt}$, $\Hvhat_{y(S),\ihat\pm\halftxt}^{n+\halftxt}$, and $\Hvhat_{y(W),\ihat+\halftxt}^{n+\halftxt}$ are defined similarly to~\eqref{eq:Hdef_averaged_edge_first}--\eqref{eq:Hdef_averaged_edge_last}. Permittivity and conductivity coefficients $\hat{\eps}_{z(SW)}$, $\hat{\sigma}_{z(SW)}$, $\Dhat_{\eps_{z(W)}}$, $\Dhat_{\sigma_{z(W)}}$, $\Dhat_{\eps_{z(S)}}$, and $\Dhat_{\sigma_{z(S)}}$ are obtained from averaging the values in the half-cells adjacent to the boundary over $\khat$ indexes. Equations~\eqref{eq:edge_SW}--\eqref{eq:edge_S} serve a purpose similar to~\eqref{eq:face_fine_mtx}.

We now combine~\eqref{eq:edge_crs} for the $z$-directed electric field on the coarse side, the recurrence relations \eqref{eq:edge_SW}--\eqref{eq:edge_S} for the fields on the fine side, and the interpolation rules~\eqref{eq:edge_aver_E_W} and \eqref{eq:edge_aver_E_S}, into a linear system of the form
\begin{equation}
\matr{A} \vect{z}^{n+1} = \vect{b}^n\,,
\label{eq:edge_syst}
\end{equation}
with an ${(r_x/2+r_y/2+3)\times1}$ vector of unknowns
\begin{equation}
\vect{z}^{n+1} = 
\begin{bmatrix}
E_z^{n+1}\\[2pt]
\Ehat_{z(SW)}^{n+1}\\[4pt]
\Evhat_{z(W)}^{n+1}\\[4pt]
\Evhat_{z(S)}^{n+1}\\[4pt]
U_x^{n+\half}\\[0pt]
U_y^{n+\half}\\[0pt]
\end{bmatrix}\,.
\end{equation}
Solving~\eqref{eq:edge_syst} for $\vect{z}^{n+1}$ gives values of $E_z^{n+1}$, $\Ehat_{z(SW)}^{n+1}$, $\Evhat_{z(W)}^{n+1}$, and $\Evhat_{z(S)}^{n+1}$.  With~\eqref{eq:edge_E1_eq}--\eqref{eq:edge_E3_eq}, samples $\Ehat_{z3}^{n+1}$ and $\Ehat_{z5}^{n+1}$ are obtained from $\Ehat_{z1}^{n+1}$, samples $\Ehat_{z4}^{n+1}$ and $\Ehat_{z6}^{n+1}$ from $\Ehat_{z2}^{n+1}$, and $\Ehat_{z4'}^{n+1}$ and $\Ehat_{z6'}^{n+1}$ from $\Ehat_{z2'}^{n+1}$. The size of system~\eqref{eq:edge_syst} is only $(r_x/2+r_y/2+3)\times (r_x/2+r_y/2+3)$ and the resulting overhead is typically small. 

\bibliographystyle{myIEEEtran}
\bibliography{IEEEabrv,bibliography}

\end{document}